\newtheorem{theorem}{Theorem}
\newtheorem{lemma}{Lemma}
\newtheorem{definition}{Definition}
\newcommand{\T}{{\rm T}}
\newcommand{\convp}{{\xrightarrow{\text{p}}}}
\newcommand\convleq{\stackrel{\mathclap{\normalfont\mbox{$p$}}}{\leq}}
\newcommand{\matr}[1]{\mathlette{\boldmath}{#1}}
\def\argmax{\mathop{\rm argmax}}
\def\expect{\mathop{\mbox{$\mathsf{E}$}}}
\newcommand{\e}{{\rm e}}
\newcommand{\define}{\stackrel{\triangle}{=}}
\newcommand{\dif}{{\rm d}}
\begin{document}

\title{On User Pairing in NOMA Uplink}
\author{\IEEEauthorblockN{Mohammad A. Sedaghat, and Ralf R. M\"{u}ller, \emph{Senior Member, IEEE}}
\thanks{Mohammad A. Sedaghat and Ralf R. M\"uller are with Friedrich-Alexander Universit\"at Erlangen-N\"urnberg (e-mails: mohammad.sedaghat@fau.de, ralf.r.mueller@fau.de).}
}

\maketitle
\begin{abstract}
User pairing in Non-Orthogonal Multiple-Access (NOMA) uplink based on channel state information is investigated considering some predefined power allocation schemes. The base station divides the set of users into disjunct pairs and assigns the available resources to these pairs. The combinatorial problem of user pairing to achieve the maximum sum rate is analyzed in the large system limit for various scenarios, and some optimum and sub-optimum algorithms with a polynomial-time complexity are proposed. In the first scenario, $2M$ users and the base station have a single-antenna and communicate over $M$ subcarriers. The performance of optimum pairing is derived for $M\rightarrow \infty$ and shown to be superior to random pairing and orthogonal multiple access techniques. In the second setting, a novel NOMA scheme for a multi-antenna base station and single carrier communication is proposed. In this case, the users need not be aware of the pairing strategy. Furthermore, the proposed NOMA scheme is generalized to multi-antenna users. It is shown that random and optimum user pairing perform similarly in the large system limit, but optimum pairing is significantly better in finite dimensions. It is shown that the proposed NOMA scheme outperforms a previously proposed NOMA scheme with signal alignment.
\end{abstract}

\section{introduction}
Non-Orthogonal Multiple-Access (NOMA) has been introduced recently as one of the key technologies for the next generation of wireless networks referred to as \textit{5G} \cite{saito2013non,ding2014performance,ding2016impact}. Compared to Orthogonal Multiple-Access (OMA) techniques, NOMA allows a larger number of users to be connected to a wireless network simultaneously \cite{ding2017application}. Furthermore, it has been shown that NOMA achieves a larger total throughput in both uplink and downlink of a wireless network compared to OMA techniques \cite{chen2016mathematical,tabassum2016non}. Compared to conventional opportunistic schemes, a higher fairness is achieved by NOMA \cite{timotheou2015fairness,ding2017application}. The low latency property of NOMA together with realizing a tradeoff between total throughput and fairness makes NOMA very attractive for 5G and a serious competitor for well-known Orthogonal Frquency Division Multiple-Access (OFDMA) \cite{wei2016survey,saito2013non}.

NOMA can be done in power domain by allowing more than one users to send over a resource block \cite{benjebbour2013system} and also in code domain by assigning different sparse codes to users, known as Sparse Code Multiple-Access (SCMA) \cite{nikopour2013sparse, nikopour2014scma}. This paper is about power domain NOMA, therefore in the remaining parts of this paper, by NOMA we refer to power domain NOMA.

%communicating In contrast to 	It has been shown that significant improvement overall performance gains compared to  methods can be achieved by pairing users smartly in both uplink and downlink of multi-user wireless networks \cite{higuchi2012non,ding2014performance}. NOMA makes the use of frequency spectrum much more efficient than OMA and this places NOMA among the key technologies of the next generation of wireless networks referred to as \textit{5G}.

The idea of NOMA is based on the fact that users normally have different channel qualities and rate demands. Information theory shows that orthogonal signaling techniques are not optimum in such cases \cite{tse:05,caire2007hard,higuchi2015non}. In NOMA, base stations share each resource among some users with different channel qualities, and the receiver in a NOMA system uses Successive Interference Cancellation (SIC) to detect the transmitted signals. It is known from information theory that, for a given set of rates and users with different channel qualities, OMA methods are suboptimum with respect to required total transmit power on multiple-access channels \cite{tse:05,caire2007hard}. Therefore, NOMA methods can be more power efficient than OMA methods when users with different channel qualities are grouped. Different channel qualities of the users typically result from fading and the near-far effect in cellular networks. 

NOMA with randomly deployed mobile stations in downlink is discussed in \cite{ding2014performance}, and it is shown that NOMA achieves higher ergodic sum rates compared to OMA schemes. Impact of user pairing in NOMA downlink is investigated in \cite{ding2016impact}. Cooperative NOMA in downlink is introduced in \cite{ding2015cooperative} in which after the direct transmission phase, users cooperate via short range communication channels. Fairness in NOMA is investigated in \cite{timotheou2015fairness} by studying some power allocation techniques based on instantaneous and statistical channel state information. Joint user pairing and power allocation in a multi-carrier downlink channel with single-antenna users and base stations is addressed numerically in \cite{sun2016optimal} using some optimization techniques. The authors in \cite{chen2016mathematical} provide a mathematical proof for the superiority of NOMA in downlink channel compared to OMA schemes from an optimization point of view. Furthermore, in \cite{he2016fast}, an efficient user pairing and power allocation scheme with low complexity is proposed for NOMA downlink.

In NOMA downlink, SIC should be implemented at user terminals which needs high processing power especially when the number of active users at each resource block is large. One should also consider the loss of practical SIC for large number of users \cite{mueller:99}. To this end, grouping only two users at each resource block in NOMA downlink has become more popular. In uplink, SIC is at base stations, therefore users need not be aware of the modulation and coding schemes employed by the other users. Furthermore, base stations have enough processing power to perform SIC. NOMA can also allow users to transmit in uplink in a grant-free manner which reduces latency significantly.

NOMA uplink using a sub-optimal multi-user detection technique is investigated in \cite{al2014uplink}, and an iterative detection and decoding scheme is proposed in \cite{al2015receiver} which improves the system performance significantly. Furthermore, an iterative method for joint subcarrier and power allocation is proposed in \cite{al2015receiver}. 

Along with single-antenna applications, NOMA has been also proposed for multi-antenna wireless networks. Ergodic capacity of Multiple-Input Multiple-Output (MIMO) NOMA is discussed in \cite{sun2015ergodic} by studying optimum power allocation. The authors in \cite{ding2016application} propose a new MIMO NOMA scheme by designing the precoding and detection matrices. In \cite{ding2016application}, the number of antennas at each user terminal should be larger than
the number of antennas at base stations. A more general MIMO NOMA is introduced in \cite{ding2016generalTWCOM} based on signal alignment for a multi-user MIMO system in which both the users and the base station are equipped with antenna arrays. It has been shown that the number of supported users in a single frequency band can be up to twice the number of antennas at the base station under some certain conditions \cite{ding2016generalTWCOM}. The proposed method in \cite{ding2016generalTWCOM} is applicable to both downlink and uplink. A NOMA scheme for massive MIMO with limited feedback is proposed in \cite{ding2016design} in which a massive MIMO NOMA channel is decomposed into multiple separated single-input single-output NOMA channels.

%NOMA for uplink channels is investigated in \cite{higuchi2012non,al2014uplink} and downlink channels with NOMA are addressed in \cite{saito2013non,ding2016impact,wei2016survey}.  Furthermore, NOMA with randomly deployed mobile stations is discussed in \cite{ding2014performance}. 
%Joint user pairing and power allocation in a multi-carrier downlink channel with single-antenna users and base stations are discussed in \cite{sun2016optimal} using some optimization techniques. In \cite{ding2016generalTWCOM}, a signal alignment method for NOMA  is proposed for a system in which both the users and the base station are equipped with antenna arrays. It has been shown that the number of supported users in a single frequency band can be as many as twice the number of antennas at the base station under some certain conditions \cite{ding2016generalTWCOM}. 

Although, user pairing and power allocation for NOMA uplink have been studied previously, to the best of our knowledge there is no analytical performance evaluation for optimum user pairing in NOMA uplink. In this paper, we answer the question how much performance improves if the users in NOMA uplink are paired optimally. Furthermore, we show that optimal user pairing in uplink can be implemented using some algorithms with polynomial time complexity. To this end, we consider the general problem of user pairing in NOMA uplink in a cellular network with some fixed power allocation schemes. We consider the investigation of optimum joint power allocation and user pairing as a future work.
At each cell, a central base station assigns the available resources including some subcarriers and spatial dimensions to some users to achieve the best performance in the sense of total rate of the users. The inter-cell interference is neglected and a single isolated cell is considered. The user terminals and the base station in the considered cell may all have multi-antenna arrays. The base station first divides the users into some groups and lets the users at each group transmit over one of the subcarriers. Then, the signals at each subcarrier are detected by dividing the users, transmitting at each subcarrier, into some pairs and applying SIC to each pair.

 Solving such a problem in general is very complicated. Thus, in this paper we consider various scenarios. The contributions of the paper for these scenarios are summarized as follows.
\begin{itemize}
\item {\bf Single antenna multiple subcarriers (SAMS):} In the SAMS setting, the users and the base station are assumed to have a single-antenna. The users are divided into disjunct pairs and every subcarrier is assigned to a pair. A sub-optimum pairing algorithm with polynomial time complexity is proposed and shown to have almost the same performance as optimum method. Furthermore, a large system performance analysis of optimum pairing for the case that the users have independent and identically distributed (iid) Gaussian small scale fading coefficients at different subcarriers, is presented. It is shown that for $2M$ users and $M$ subcarriers, the total rate of the system divided by $M\log_2\log M$ converges to $1$ in probability as $M\rightarrow \infty$. Furthermore, for frequency-flat channels and large $M$, the performance of optimum pairing is derived and compared against the performance of random pairing and time division multiple-access (TDMA) at each subcarrier.

%%tainja
\item {\bf Multiple base station antennas single subcarrier (MBASS):}
\begin{itemize}
\item The MBASS-SAU setting is the case of a multi-antenna base station and single-antenna users (SAU). We consider one single subcarrier and investigate the performance of optimum spatial dimension assignment. The base station divides the users into pairs and detects every pair using SIC. It is shown that the optimum pairing method has significant performance gain compared to random pairing in finite dimensions. We show that optimum assignment in this case can be implemented in polynomial time using the Hungarian algorithm with a modified cost matrix. Furthermore, a large system analysis of optimum pairing method is presented and it is shown that when the number of users and the number of antennas at the base station are large, the performances of both  optimum and random pairing converge to the same limit.
\item Motivated by the work in \cite{ding2016application,ding2016generalTWCOM}, the case of multi-antenna users (MAU) is considered as the MBASS-MAU setting. A new polynomial time NOMA scheme is proposed in which the users need not know the user pairing strategy and the base station pairs the users based on channel state information. It is shown that, unlike to NOMA with signal alignment \cite{ding2016generalTWCOM}, there are no requirements on the ranks of the channels and the number of antennas at the user terminals. Furthermore, the users only need to know their own channel. We show that the proposed method has better performance than NOMA with signal alignment.
\end{itemize}
\end{itemize}

We use bold lowercase letters for vectors and bold uppercase letters for matrices. Conjugate transpose of a matrix is denoted by $\cdot^\dagger$ and the transpose itself is shown by $\cdot^{\sf T}$.  The identity matrix is shown by $\matr{I}$. Moreover, the real and complex sets are shown by $\mathbb{R}$ and $\mathbb{C}$, respectively. The determinant is denoted by $\det(\cdot)$, $\convp$ denotes the convergence in probability and $x\convleq y$ implies that the random variable $x$ is less than or equal to $y$ with probability 1. Furthermore, $a\sim \mathcal{CN}(\mu,\sigma^2)$ means that $a$ is a complex Gaussian random variable with mean $\mu$ and variance $\sigma^2$.

\section{System Model and Problem Formulation}
A multi-user wireless cellular network with a central base station at each cell is considered. For sake of analysis, we consider one individual cell and neglect the inter-cell interference. We assume that the base station in the considered cell has $N$ antennas and there are $M$ subcarriers available for the communication between the users and the base station. Each subcarrier is assigned to $K$ users simultaneously, thus the total number of users is $MK$. It is assumed that each user has $L$ antennas. %This is a general setup of NOMA in which each sub-band is assigned to more than two users.

Throughout the whole paper, we consider the uplink channel. It is assumed that the base station is able to do one-step SIC which is explained in Definition~\ref{definsucc} for a simple setting. 
\begin{definition}\label{definsucc}
Let $x_1$ and $x_2$ be the iid Gaussian signals of two single-antenna users with transmit power $P_1$ and $P_2$, respectively. Assume that a receiver with $N$ antennas knows the codebooks of both users and the received signal at the receiver is 
\begin{eqnarray}
\matr y=\matr{h_1}x_1+\matr{h_2}x_2+\matr n,
\end{eqnarray}
for some known channel vectors $\matr h_1,\matr h_2$ and the additive white Gaussian noise $\matr n\sim\mathcal{CN}(\matr 0,\sigma_n^2 \matr{I})$. The receiver can first detect and decode the signal of one of the users using a Minimum Mean Square Error (MMSE) detector and then cancel it from the received signal and detect the signal of the other user. We call this well-known detection method one-step SIC throughout this paper. From information theory, the maximum total rate is obtained as $\log_2(1+\frac{\|\matr h_1\|^2P_1+\|\matr h_2\|^2P_2}{\sigma_n^2})$ \cite{cover:91}. 
\end{definition}

In this paper, we limit the analysis to one-step SIC. One can consider SIC with higher steps at the expense of complexity.  
For given rates on the multiple-access channel with two users, the total transmit power is minimized if the base station first detects the user with the stronger channel and then the user with the weaker channel. %This is different than the NOMA strategy in downlink channel in which the signal of weaker user should be first detected.

The fundamental problem of NOMA is how to assign the available resources to the users. In the considered network, the available resources are the subcarriers and the spatial dimensions obtained from the MIMO nature of the system. %This problem can be considered in both downlink and uplink channels. However, in this paper we only consider the uplink channel.
We analyze such a problem in the SAMS setting in Sections~\ref{singledifffading} and \ref{samfadingsec} for frequency-selective and frequency-flat fading, respectively.
 In Sections~\ref{multiantennasing} and \ref{rankdef}, we address the MBASS-SAU and MBASS-MAU setting, respectively.

In this paper a single cell uplink channel is considered in which several users are distributed within the cell around the base station. To model the path loss of the users, similar to \cite{ding2016generalTWCOM}, we assume that the users are uniformly distributed in a disc with radius $R_d$ and the base station is at the center of the disc. The shadowing effect is neglected in this paper. The small scale fading coefficients are assumed to be iid Gaussian distributed, and the path loss of $k$th user is modeled as $\frac{1}{r_0^2+r_k^2}$ where $r_k$ is the distance between the $k$th user and the base station and $r_0$ is a constant to avoid singularity\footnote{In practice, $r_0$ is on the order of a wavelength and accounts for the fact that free-space path loss does not apply in the near-field of antennas.}. %One can assume $r_0$ as a factor to model the height difference of the base station and the users or the minimum possible distance to the base station.

Although the proposed schemes in this paper can be applied to an uplink channel with any power allocation strategy, we limit the numerical and simulation results to two types of power control methods for the user terminals. One is when the users set their power proportional to the inverse of their path loss. In this case, the power multiplied by the path loss for all the users is equal to a constant. We call this method perfect power control (PPC). The second method is when the users transmit with the same power independent of their path loss which is called the equal power (EP) method in the sequel. Investigating optimum joint power allocation and user pairing in NOMA uplink analytically is an interesting future work.

\section{SAMS in Frequency-Selective Fading}
\label{singledifffading}
In this section, we consider the problem of assigning each of the $M$ subcarriers to two single-antenna users when the base station has a single antenna, i.e., $K=2$, $N=1$ and $L=1$. Thus, there are $2M$ users in total. Assume without loss of generality that the $\ell_{2m-1}$th and the $\ell_{2m}$th user are selected to transmit on the $m$th subcarrier. The received signal at the base station for the $m$th subcarrier is 
\begin{eqnarray}
y_m=\sqrt{d_{\ell_{2m-1}}}h_{\ell_{2m-1},m}x_{\ell_{2m-1},m}+\sqrt{d_{\ell_{2m}}}h_{\ell_{2m},m}x_{\ell_{2m}}+n_m,
\end{eqnarray}
where $d_{i}$, $h_{i,j}\sim\mathcal{CN}(0,1)$ and $x_{i,j}$ model the path loss, the small scale fading and the transmitted signal of the $i$th user at the $j$th subcarrier, respectively, and $n_m$ is additive white Gaussian noise with variance $\sigma_n^2$. It is assumed that the path loss of each user is identical at different subcarriers, but the fast fading coefficients differ. The signals of the users are assumed to be independent and Gaussian distributed. Let $P_i$ be the transmit power of the $i$th user which is selected prior to the subcarrier assignment according to a power allocation method. Furthermore, it is assumed that the base station knows the $d_i$s and $h_{i,j}$s perfectly. Note that in general, power and subcarrier allocation can be done jointly. However, in this paper we assume that the power allocation has been done beforehand based on the path loss of the users.

At the beginning of each transmission interval, the base station determines the optimum pairing based on the channel information. We assume that during each transmission interval $d_i$s and $h_{i,j}$s are constant. Thus, the transmission interval must be shorter than the coherence time of the channel. 
%Note that to reduce the complexity, one may optimize the assignment based on $d_i$s only. This case is explained in Appendix~\ref{}.

The base station applies one-step SIC to detect the signals with the sum rate 
\begin{eqnarray}
R_{m}(\ell_{2m-1},\ell_{2m})\define \log_2\left(1+\frac{|h_{\ell_{2m-1},m}|^2d_{\ell_{2m-1}}P_{\ell_{2m-1}}+|h_{\ell_{2m},m}|^2d_{\ell_{2m}}P_{\ell_{2m-1}}}{\sigma_n^2} \right), 
\end{eqnarray}
at the $m$th subcarrier.
For such a setting, the optimum user pairing is formulated as
\begin{eqnarray} \label{optNOMa2MAC}
\Pi_{\rm opt}=\argmax\limits_{(\ell_{2m-1},\ell_{2m})\in\Pi}\sum\limits_{m=1}^M R_m(\ell_{2m-1},\ell_{2m}),
\end{eqnarray} 
where $\Pi$ models all the possible pairing sets. $\Pi_{\rm opt}$ denotes the optimal pairing scheme which gives the maximum sum rate. 
\subsection{A sub-optimum algorithm with polynomial time complexity}
The total number of pairing sets is $\frac{(2M)!}{2^M}$ which can be approximated using Stirling's approximation as $\sqrt{4\pi M} \left(\frac{\sqrt{2}M}{\e}\right)^{2M}$. As observed, the total number of sets grows superexponential with $M$, thus the optimum solution is not feasible to calculate even for moderate $M$. To simplify the problem, we use the bounds provided in the following lemma.
\begin{lemma}\label{boundsrateMAC}
 $R_m$ is bounded by
\begin{eqnarray}
\frac{1}{2}\log_2\left(1+2\frac{|h_{\ell_{2m-1},m}|^2d_{\ell_{2m-1}}P_{\ell_{2m-1}}}{\sigma_n^2} \right)+\frac{1}{2}\log_2\left(1+2\frac{|h_{\ell_{2m},m}|^2d_{\ell_{2m}}P_{\ell_{2m}}}{\sigma_n^2} \right) \leq R_{m}(\ell_{2m-1},\ell_{2m}) \nonumber \\
 \leq 
\log_2\left(1+\frac{|h_{\ell_{2m-1},m}|^2d_{\ell_{2m-1}}P_{\ell_{2m-1}}}{\sigma_n^2} \right)+\log_2\left(1+\frac{|h_{\ell_{2m},m}|^2d_{\ell_{2m}}P_{\ell_{2m}}}{\sigma_n^2} \right).
\end{eqnarray}
\end{lemma}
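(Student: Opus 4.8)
The plan is to strip the statement down to an elementary inequality between nonnegative reals, using nothing beyond monotonicity and concavity of the map $t\mapsto\log_2(1+t)$.

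First I would introduce the shorthand $u\define|h_{\ell_{2m-1},m}|^2d_{\ell_{2m-1}}P_{\ell_{2m-1}}/\sigma_n^2$ and $v\define|h_{\ell_{2m},m}|^2d_{\ell_{2m}}P_{\ell_{2m}}/\sigma_n^2$, both of which are nonnegative. Then $R_m(\ell_{2m-1},\ell_{2m})=\log_2(1+u+v)$, and the asserted chain of inequalities is exactly
\begin{equation*}
\frac{1}{2}\log_2(1+2u)+\frac{1}{2}\log_2(1+2v)\ \leq\ \log_2(1+u+v)\ \leq\ \log_2(1+u)+\log_2(1+v).
\end{equation*}

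For the right-hand inequality I would observe that $(1+u)(1+v)=1+u+v+uv\geq 1+u+v$, since $uv\geq 0$, and then apply monotonicity of $\log_2$ together with $\log_2(xy)=\log_2 x+\log_2 y$. For the left-hand inequality, monotonicity of $\log_2$ reduces the claim to $(1+2u)(1+2v)\leq (1+u+v)^2$; expanding both sides, the difference (right minus left) equals $u^2-2uv+v^2=(u-v)^2\geq 0$, which settles it. Equivalently, the left inequality is Jensen's inequality for the concave function $t\mapsto\log_2(1+t)$ evaluated at the midpoint of $2u$ and $2v$, while the right inequality is the super-additivity $f(u+v)\leq f(u)+f(v)$ valid for any concave $f$ with $f(0)=0$; I would probably present the short algebraic version and mention the concavity interpretation as a remark.

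There is essentially no obstacle here: the only points requiring care are that $u,v\geq 0$ (so that the normalizations and the manipulations above are legitimate) and, if one wants completeness, the equality conditions — the upper bound is tight iff $uv=0$, i.e.\ one of the two effective SNRs vanishes, and the lower bound is tight iff $u=v$. Recording these conditions is worthwhile because they indicate precisely when the bounds, and hence the sub-optimum pairing algorithm built on them in the next subsection, are expected to be loose or tight.
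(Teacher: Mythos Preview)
Your proof is correct and essentially matches the paper's: the paper invokes the AM--GM inequality for the lower bound (which is precisely your observation that $(1+u+v)^2-(1+2u)(1+2v)=(u-v)^2\geq 0$) and interprets the upper bound as the no-interference case, which is exactly your $(1+u)(1+v)\geq 1+u+v$ argument. Your write-up is more explicit than the paper's one-line sketch, and your equality-case discussion is a nice addition.
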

\begin{proof}
The lower bound can be easily proven using the inequality of arithmetic and geometric means. In fact, the lower bound is achievable using TDMA and the upper bound is the case that there is no interference between the users. The bounds become tight in the low Signal to Noise Ratio (SNR) regime.
\end{proof}
Assuming that the $\ell_{2m-1}$th and the $\ell_{2m}$th user transmit at the $m$th subcarrier, the main advantage of the bounds provided in Lemma~\ref{boundsrateMAC} is that they follow the general form
\begin{eqnarray}
g_{\ell_{2m-1},m}+g_{\ell_{2m},m},
\end{eqnarray}
in which the contribution of the users in the sum rate can be decoupled. $g_{i,j}$ approximately represents the contribution of $j$th user in the sum rate at the $i$th subcarrier.
Using any of the bounds in Lemma~\ref{boundsrateMAC} as an approximation for $R_m(\ell_{2m-1},\ell_{2m})$, we can solve the optimization problem in \eqref{optNOMa2MAC} in polynomial time according to the following lemma. % of the Hungarian method or the Kuhn–Munkres algorithm. %In fact, any approximation that separates the sum rate $R_m$ into two parts that each of which is function of one of the users can 

\begin{lemma}\label{lem2}
Assume the pairing problem 
\begin{eqnarray}\label{apprprobassign}
\max\limits_{(\ell_{2m-1},\ell_{2m})\in\Pi}\sum\limits_{m=1}^M g_{\ell_{2m-1},m}+g_{\ell_{2m},m},
\end{eqnarray}
where $m\in\{1,\cdots,M\}$, $\ell_{i} \in \{1,\cdots,2M\}$ and $\Pi$ models all the pairing sets.
The optimization problem can be solved optimally using the Hungarian algorithm with a polynomial time complexity. 
\end{lemma}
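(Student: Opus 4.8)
The plan is to reduce the pairing problem \eqref{apprprobassign} to a maximum--weight perfect matching (i.e.\ an assignment) problem on a balanced bipartite graph, and then invoke the Hungarian algorithm.

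The first step is the key observation that the objective in \eqref{apprprobassign} is separable across users: the term $g_{\ell_{2m-1},m}$ contributed by the user placed ``first'' on subcarrier $m$ depends only on that user's index and on $m$, and not on the index $\ell_{2m}$ of the partner, and symmetrically for $g_{\ell_{2m},m}$. Consequently, specifying a pairing set in $\Pi$ is the same as specifying, for each of the $2M$ users, one subcarrier on which it transmits, subject to the constraint that every subcarrier is chosen by exactly two users; the value of such an assignment is $\sum_{i=1}^{2M} g_{i,\,\mu(i)}$, where $\mu(i)$ is the subcarrier assigned to user $i$.

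The second step encodes the ``exactly two users per subcarrier'' constraint as a matching constraint by duplicating subcarriers. I would build a complete bipartite graph with the $2M$ users on one side and $2M$ slots on the other, where subcarrier $m$ is represented by two slots, and give every edge from user $i$ to a slot of subcarrier $m$ the weight $g_{i,m}$. A perfect matching here assigns each user to exactly one slot and leaves no slot empty, hence places precisely two users on each subcarrier; conversely, any feasible pairing set induces such a matching by routing the two users of subcarrier $m$ to its two slots. Since the edge weights are copies of $g_{i,m}$, this correspondence preserves objective values in both directions, so the optimal value of \eqref{apprprobassign} equals the maximum weight of a perfect matching in the constructed graph.

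Finally, maximum--weight perfect matching on a balanced bipartite graph with $2M$ vertices on each side is precisely the assignment problem, which the Hungarian algorithm solves in $O\!\left((2M)^3\right)$ time, polynomial in $M$. I do not anticipate a genuine obstacle; the only points deserving care are the slot-duplication device and the verification that the two directions of the reduction are value-preserving (the correspondence is many-to-one, since the two slots of a subcarrier are interchangeable, but this is harmless for computing the maximum).
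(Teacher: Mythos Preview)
Your proposal is correct and is essentially the same approach as the paper's: both rely on the separability of the objective and on duplicating each subcarrier into two slots so that the ``two users per subcarrier'' constraint becomes a standard $2M\times 2M$ one-to-one assignment, solvable by the Hungarian algorithm in $\mathcal{O}(M^3)$. The only cosmetic differences are that the paper first converts to a minimization by subtracting all weights from a common constant (to fit the classical cost-matrix formulation) and phrases the reduction in job--worker language with cost matrix $[\matr{Z}~\matr{Z}]$, whereas you phrase it as maximum-weight bipartite matching; the underlying construction and argument are identical.
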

\begin{proof}
%(see \cite{kuhn1955hungarian})
To convert the problem to a minimization problem, we define
\begin{eqnarray}
\hat{g}_{i,j}\define \max \limits_{k,n}g_{k,n}-g_{i,j}.
\end{eqnarray}
It can be easily shown that the solution pairs of the problem 
\begin{eqnarray}
\min\limits_{(\ell_{2m-1},\ell_{2m})\in\Pi}\sum\limits_{m=1}^M \hat{g}_{\ell_{2m-1},m}+\hat{g}_{\ell_{2m},m},
\end{eqnarray}
are the same as the solution pairs of the original problem. Furthermore, $\hat{g}_{i,j}\geq 0$. Next, 
define $\matr{Z}\in\mathbb{R}^{2M\times M }$ such that its $({i,j})$th entry is equal to $\hat{g}_{i,j}$.
To obtain the optimal pairs, we can solve an equivalent problem which is a linear assignment problem with the cost matrix 
$\matr{Z}_{\rm t}\define[\matr{Z} ~\matr{Z}]$.  The problem is the same problem as the job-worker problem in which we have $2M$ workers and $M$ jobs and we want to assign each job to two workers. To solve such a problem, we copy the cost matrix $\matr{Z}$ for $M$ new virtual jobs and minimize the cost by assigning each job to one worker exclusively. Finally, the workers assigned to the $i$th job and the $(i+M)$th virtual job will be assigned to $i$th job which means two workers per job. This linear assignment problem can be solved optimaly using the Hungarian algorithm with a polynomial time complexity \cite{kuhn1955hungarian}. The complexity of the Hungarian algorithm in this case is $\mathcal{O}(M^3)$ \cite{lawler2001combinatorial} %In Appendix~\ref{hungarian}, we explain the Hungarian algorithm briefly.
\end{proof}

 Due to the lack of space here we omit to explain the Hungarian algorithm. To see the details of the Hungarian algorithm, some simple examples and also an online demo video, please see \cite{makohon2016hungarian}. Lemma~\ref{lem2} gives a sub-optimum algorithm for the user pairing problem. To derive the performance of this sub-optimum method, first we derive the optimum set to maximize the lower bound or the upper bound in Lemma~\ref{boundsrateMAC}, and then we calculate the sum rate for this given set. 

% Note that although we can solve the optimization problem in \eqref{apprprobassign} optimally, the solution pairs are sub-optimal for the original sum rate maximization, since we use the bounds presented in Lemma~\ref{boundsrateMAC}. Therefore, our results for the maximum sum rate are a lower bound.

\subsection{A large system performance analysis for optimum pairing}
%In general, the powers $P_i$s can be chosen to optimize the overall performance of the system for every set of fading coefficients. However, in this section we assume that each user uses a power control strategy independent of the user pairing method. In fact we assume that the user select power control strategies based on their large scale fading and the path loss coefficients. 
Define $\gamma_i\define P_id_i/\sigma_n^2$ which is the receive SNR of the $i$th user at the base station. Therefore, 
\begin{eqnarray}
R_m(\ell_{2m-1},\ell_{2m})= \log_2\left(1+\gamma_{\ell_{2m-1}} |h_{\ell_{2m-1},m}|^2+\gamma_{\ell_{2m}}|h_{\ell_{2m},m}|^2\right). 
\end{eqnarray}
For the sake of analysis, we consider iid Gaussian $h_{i,j}$ with variance $1$. Furthermore,
we assume that the users are distributed around the base station in a finite size area and all the users have nonzero $\gamma_i$ at the base station. The following theorem states the main result of this subsection.
\begin{theorem}\label{theor1}
Let $h_{i,j}$ be iid Gaussian distributed with variance of 1 and $c_1\leq\gamma_i\leq c_2$ for two finite constants $c_1$ and $c_2$. For $M\rightarrow \infty$, 
\begin{eqnarray}\label{limitnoma1}
 \frac{\max\limits_{(\ell_{2m-1},\ell_{2m})\in\Pi}\sum\limits_{m=1}^M R_m(\ell_{2m-1},\ell_{2m})}{M\log_2\log M}\convp 1.
\end{eqnarray}
\end{theorem}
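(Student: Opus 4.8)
\emph{Proof strategy.} The plan is to squeeze the optimal sum rate
$S_M:=\max_{(\ell_{2m-1},\ell_{2m})\in\Pi}\sum_{m=1}^M R_m(\ell_{2m-1},\ell_{2m})$
between two random quantities which, after division by $M\log_2\log M$, both converge to $1$ in probability. Two elementary facts about the channel will do all the work: since $h_{i,j}\sim\mathcal{CN}(0,1)$, the $|h_{i,j}|^2$ are iid $\mathrm{Exp}(1)$, so $\Pr(|h_{i,j}|^2\ge t)=e^{-t}$, and the maximum of $n$ such variables concentrates around $\log n$.

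\emph{Upper bound.} First I would note that, because $\gamma_i\le c_2$, every admissible pairing satisfies $\gamma_{\ell_{2m-1}}|h_{\ell_{2m-1},m}|^2+\gamma_{\ell_{2m}}|h_{\ell_{2m},m}|^2\le 2c_2H_M$ with $H_M:=\max_{1\le i\le 2M,\,1\le j\le M}|h_{i,j}|^2$, so $S_M\le M\log_2(1+2c_2H_M)$ (consistent with the upper bound of Lemma~\ref{boundsrateMAC}). A union bound over the $2M^2$ coefficients gives $\Pr\!\big(H_M>(1+\delta)\log(2M^2)\big)\le(2M^2)^{-\delta}\to0$, so with probability tending to $1$ one has $H_M=O(\log M)$ and hence $S_M\le M(\log_2\log M+O(1))$.

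\emph{Lower bound.} This is the substantive part. I would fix $\epsilon\in(0,1)$, set $t_M:=(1-\epsilon)\log M$ so that $p_M:=\Pr(|h_{i,j}|^2\ge t_M)=M^{\epsilon-1}$, and form the bipartite graph $G$ with the $M$ subcarriers on one side and the $2M$ users on the other, joining user $i$ to subcarrier $m$ exactly when $|h_{i,m}|^2\ge t_M$; its edges are independent, each present with probability $p_M$. The crucial claim is that, with probability tending to $1$, $G$ admits a matching saturating all $M$ subcarriers. I would prove this through Hall's condition: the probability it fails is at most $\sum_{s=1}^{M}\binom{M}{s}\binom{2M}{s-1}(1-p_M)^{s(2M-s+1)}$, and using $2M-s+1\ge M$, $p_MM=M^{\epsilon}$ and $\binom{M}{s}\binom{2M}{s-1}\le(2e^2M^2/s^2)^s$ this is bounded by $\sum_{s\ge1}\big(2e^2M^2e^{-M^{\epsilon}}\big)^s$, which tends to $0$ because $M^2e^{-M^{\epsilon}}\to0$. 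Given such a matching $m\mapsto u_m$ (the $u_m$ distinct, each with $|h_{u_m,m}|^2\ge t_M$), I would build a pairing by grouping $u_m$ with one of the $M$ users not used by the matching, one per subcarrier; then $R_m\ge\log_2\!\big(1+\gamma_{u_m}|h_{u_m,m}|^2\big)\ge\log_2(1+c_1t_M)=\log_2\log M+O(1)$ (here $c_1>0$ is essential), so $S_M\ge M(\log_2\log M+O(1))$ on that event.

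Combining the two bounds, $S_M/(M\log_2\log M)$ lies, with probability tending to $1$, between two deterministic sequences that both tend to $1$, which yields \eqref{limitnoma1}. The one genuinely non-routine step is the saturating-matching claim; the rest is extreme-value bookkeeping for maxima of exponentials. The delicate point there is choosing the threshold $t_M$ in the right window: $p_MM=M^{\epsilon}\to\infty$ must be fast enough that the Hall union bound over all subsets of subcarriers still collapses, while at the same time $\log_2(c_1t_M)\sim\log_2\log M$, so that no $\log_2\log M$ is lost in the lower bound.
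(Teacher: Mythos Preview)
Your proof is correct. The upper bound is essentially the same idea as the paper's (extreme-value behaviour of exponential maxima), though you take the global maximum over all $2M^2$ coefficients while the paper takes the two best users on each subcarrier separately; both yield $S_M\le M(\log_2\log M+O(1))$ with probability tending to $1$.

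The lower bound is where your argument genuinely diverges from the paper's. The paper constructs an achievable pairing \emph{greedily}: assign to subcarrier $1$ its two best users, remove them, assign to subcarrier $2$ the two best among the remaining $2M-2$ users, and so on. Because the columns $h_{\cdot,m}$ are independent across $m$, subcarrier $m$ still sees $2(M-m+1)$ fresh iid exponentials, whose maximum is of order $\log(M-m+1)$. This produces a lower bound of the form $\sum_{m=1}^M\log_2(1+2\hat{c}\log m)$, and the paper then needs a separate lemma (Stirling-type, their Lemma in the appendix) to show that this sum, divided by $M\log_2\log M$, tends to $1$. Your random-bipartite-graph construction with threshold $t_M=(1-\epsilon)\log M$ and Hall's condition sidesteps this two-step structure entirely: once the saturating matching exists (which your union bound over Hall violators establishes cleanly, since $p_MM=M^{\epsilon}\to\infty$ fast enough), every subcarrier simultaneously gets a user with $|h|^2\ge(1-\epsilon)\log M$, so the lower bound is a flat $M\log_2(1+c_1(1-\epsilon)\log M)$ with no summation lemma needed. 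Your route is shorter and arguably more transparent; the paper's greedy route has the compensating virtue of being algorithmic (it describes an explicit polynomial-time pairing), which matters for the paper's broader narrative but is not needed for the theorem itself.
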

 \begin{proof}
 The proof is given in Appendix~\ref{prooftheo1}.
 \end{proof}
 Based on the result of Theorem~\ref{theor1}, it is easy to show that 
\begin{eqnarray}
\lim\limits_{M\rightarrow \infty}\frac{\expect\max\limits_{(\ell_{2m-1},\ell_{2m})\in\Pi}\sum\limits_{m=1}^M R_m(\ell_{2m-1},\ell_{2m})}{M\log_2\log M}= 1,
\end{eqnarray}
where the expectation $\expect$ is over the channel realizations. 

Another interesting asymptotic behavior is the case of fixed $M$ and $\sigma_n^2\rightarrow 0$. One can use the two bounds presented in the proof of Theorem~\ref{theor1} and show that 
\begin{eqnarray}
 \frac{\max\limits_{(\ell_{2m-1},\ell_{2m})\in\Pi}\sum\limits_{m=1}^M R_m(\ell_{2m-1},\ell_{2m})}{M\log_2(1+\frac{c_3}{\sigma_n^2}\log M)}\convp 1,
\end{eqnarray}
for $\sigma_n^2\rightarrow 0$ and any arbitrary finite and positive constant $c_3$. We omit the proof here since it is quite similar to the proof of Theorem~\ref{theor1}.

%Using the fact that $$ same method as in the proof of Theorem~\ref{theor1}, one can show that for the PPC strategy, i.e., $\gamma_i=\gamma$, 
%\begin{eqnarray}\label{limitnoma2}
% \lim\limits_{M\rightarrow \infty}\frac{\expect\max\limits_{(\ell_{2m-1},\ell_{2m})\in\Pi}\sum\limits_{m=1}^M R_m}{M\log(1+2\gamma \log(M))}= 1.
%\end{eqnarray}

%
%the average total rate\footnote{The average is over the channel realizations, thus one may call this average rate as the ergodic rate.} of the optimum assignment method normalized by $M\log(\log(M))$ converges to $1$ as $M\rightarrow \infty$. 
%In the numerical results section, we investigate the  

 %If the assignment is done randomly, the total rate scales with $M$. 
 
%\subsection{Joint sub-band assignment and power allocation}
%In the last subsections, we assumed that the power allocation has been done prior to the sub-band assignment. In this subsection, we propose a joint sub-band assignment and power allocation. We first present the optimum power allocation given a pairing set by solving 
%\begin{eqnarray}
%[P_{1,{\rm opt}},\ldots,P_{2M,{\rm opt}}]=\argmax\limits_{[P_1,\ldots,P_{2M}]} \sum\limits_{m=1}^M \log\left(1+\frac{|d_{\ell_{2m-1}}h_{\ell_{2m-1},m}|^2P_{\ell_{2m-1}}+|d_{\ell_{2m}}h_{\ell_{2m},m}|^2P_{\ell_{2m-1}}}{\sigma_n^2} \right).
%\end{eqnarray}
%This can be simply done using the water-filling algorithm \cite{tse}

 \section{SAMS in Frequency-Flat Fading}
\label{samfadingsec}
If the channels of all users are flat in frequency, the channels are identical for all subcarriers, i.e., $h_{i,j}=h_i$. We use the same setting as in Section~\ref{singledifffading}, i.e., $N=L=1$ and $K=2$. The goal is to obtain the best pairing set which maximizes the total rate
 \begin{eqnarray}
\Pi_{\rm opt}=\argmax\limits_{(\ell_{2m-1},\ell_{2m})\in\Pi}\sum\limits_{m=1}^M R_m,
 \end{eqnarray}
 where in this case
 \begin{eqnarray}
 R_{m}\define \log_2\left(1+\frac{|h_{\ell_{2m-1}}|^2d_{\ell_{2m-1}}P_{\ell_{2m-1}}+|h_{\ell_{2m}}|^2d_{\ell_{2m}}P_{\ell_{2m-1}}}{\sigma_n^2} \right). 
 \end{eqnarray}
 The following theorem presents the optimum user pairing strategy.
\begin{theorem}\label{optstrategysamefading}
If we sort the channel coefficients such that $| h_1|^2d_1P_1\geq |h_2|^2d_2P_2\geq \cdots\geq |h_{2M}|^2d_{2M}P_{2M}$, then the optimum pairing set which maximizes the total rate is
\begin{eqnarray}
\Pi_{\rm opt}=\{(1,2M),(2,2M-1),\ldots,(M,M+1)\}.
\end{eqnarray}
\end{theorem}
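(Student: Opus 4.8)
The plan is to reduce the statement to a short combinatorial exchange argument. Writing $a_i\define |h_i|^2d_iP_i/\sigma_n^2$, the sorting hypothesis reads $a_1\geq a_2\geq\cdots\geq a_{2M}$, and the contribution of a pair $(i,j)$ is $\log_2(1+a_i+a_j)$. Since $\log_2$ is strictly increasing, maximizing $\sum_{m=1}^M R_m$ over all pairing sets $\Pi$ is equivalent to maximizing the product $\prod_{m=1}^M\left(1+a_{\ell_{2m-1}}+a_{\ell_{2m}}\right)$. The total $\sum_{i=1}^{2M}a_i$ is the same for every pairing, so the only question is how this fixed mass is split among the $M$ factors. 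I would prove the claimed pairing is optimal by induction on $M$, the case $M=1$ being trivial since there is only one pairing.

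For the induction step I would first show that there is an optimal pairing containing the pair $(1,2M)$. Start from any optimal pairing. If it already contains $(1,2M)$, nothing is to be done. Otherwise user $1$ is paired with some $k\neq 2M$ and user $2M$ with some $j\neq 1$, and $j\neq k$ because a pairing is a set of disjoint pairs. Replace $(1,k)$ and $(j,2M)$ by $(1,2M)$ and $(j,k)$; this changes the product only through the factor
\begin{eqnarray}
\frac{(1+a_1+a_{2M})(1+a_j+a_k)}{(1+a_1+a_k)(1+a_j+a_{2M})} .
\end{eqnarray}
A direct expansion shows that the numerator minus the denominator equals $(a_1-a_j)(a_k-a_{2M})$, which is nonnegative because $a_1$ is the largest and $a_{2M}$ the smallest among all the $a_i$. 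Hence the modified pairing is at least as good, and therefore also optimal, and it contains $(1,2M)$.

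Once the pair $(1,2M)$ is fixed, the remaining indices $\{2,\dots,2M-1\}$ still satisfy $a_2\geq a_3\geq\cdots\geq a_{2M-1}$, so by the induction hypothesis the maximizing way to pair them among themselves is $(2,2M-1),(3,2M-2),\dots,(M,M+1)$. Adjoining the pair $(1,2M)$ gives exactly the pairing set claimed in the theorem, which completes the induction.

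The argument is elementary; the only point that needs genuine care is the algebraic identity that the difference of the two cross products collapses to the single factorized term $(a_1-a_j)(a_k-a_{2M})$, since everything else follows from monotonicity of $\log_2$ and the ordering of the $a_i$. It is also worth stating explicitly that the exchanged indices $j$ and $k$ are distinct from one another and from $1$ and $2M$, so that $(1,2M)$ together with $(j,k)$ is a legitimate pairing on the same set of users.
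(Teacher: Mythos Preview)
Your proof is correct and rests on the same four-element exchange inequality as the paper's argument: for sorted values, the pairing $\{(w_1,w_4),(w_2,w_3)\}$ dominates both $\{(w_1,w_2),(w_3,w_4)\}$ and $\{(w_1,w_3),(w_2,w_4)\}$. The paper packages this as a direct contradiction (if the claimed $\Pi$ were not optimal, some quadruple in the optimum would be mispaired and could be improved), while you package it as induction, always peeling off the extreme pair $(1,2M)$ first; the key algebraic step $(a_1-a_j)(a_k-a_{2M})\geq 0$ is identical in both, and your version simply spells it out more carefully.
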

 \begin{proof}
 Assume the contrary, i.e.\ $\Pi=\{(1,2M),(2,2M-1),\ldots,(M,M+1)\}$ is not the optimum set. Then, there is at least one subset $\{w_1,w_2,w_3,w_4\}$ where $w_i\in\{1,\ldots,2M\}$ in the optimum set that is paired as one of the following options
 \begin{eqnarray}
 \Pi_1&=&\{(w_1,w_2),(w_3,w_4)\}, \\
  \Pi_2&=&\{(w_1,w_3),(w_2,w_4)\}. 
 \end{eqnarray}
 It is easy to show that these two pairing schemes result in a lower total rate compared to the paring scheme $\{(w_1,w_4),(w_2,w_3)\}$. This proves the theorem.
 \end{proof}
  
Next, we analyze the optimum pairing scheme for $\gamma_i\define \frac{d_iP_i}{\sigma_n^2}$ and $h_{i}\sim\mathcal{CN}(0,1)$ in the large system limit. In this case, we assume that the whole bandwidth including all the subcarriers is still smaller than the coherence bandwidth of the channel of the users such that the channel of each user is the same at different subcarriers. Let $f(\gamma)$ be the empirical density function of the SNRs $\gamma_i$.  

\begin{theorem}
For $h_{i}\sim\mathcal{CN}(0,1)$, $\gamma_i\sim f(\gamma_i)$ and ${M\rightarrow \infty}$, the average rate of the users converges to
\begin{eqnarray}
\bar{R}\define \frac{1}{2M}\sum\limits_{m=1}^M R_m\convp\int\limits_{0}^{1} \frac{1}{2}\log_2\left(1+F^{-1}(t/2) +F^{-1}(1-t/2)\right) \dif t,
\end{eqnarray}
where
\begin{eqnarray}\label{cdfz}
F(z)=1-\int_{\mathbb{R}} \e^{-z/x} f(x)\dif x,
\end{eqnarray}
and $F^{-1}(x)$ is the inverse of $F(x)$ with respect to composition. 
\end{theorem}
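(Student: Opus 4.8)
The plan is to combine the explicit optimal pairing from Theorem~\ref{optstrategysamefading} with classical facts about empirical quantiles. Set $X_i\define\gamma_i|h_i|^2$, and treat $\gamma_1,\dots,\gamma_{2M}$ as an i.i.d.\ sample from $f$ with $\int x\,f(x)\,\dif x<\infty$ (automatic when $\gamma$ is bounded; if the $\gamma_i$ are deterministic with empirical law tending to $f$, the argument needs only a routine modification). Since $|h_i|^2$ is exponential with mean one and independent of $\gamma_i$, conditioning on $\gamma_i=x$ gives $\Pr(X_i\le z\mid\gamma_i=x)=1-\e^{-z/x}$, and averaging over $f$ shows each $X_i$ has the CDF $F$ of \eqref{cdfz}. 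Because $z\mapsto\e^{-z/x}$ is continuous and strictly decreasing for every $x>0$, $F$ is continuous and strictly increasing on $[0,\infty)$ with $F(0)=0$ and $\lim_{z\to\infty}F(z)=1$, so $F^{-1}$ is a continuous increasing bijection $(0,1)\to(0,\infty)$ whose only singularity is $F^{-1}(u)\to\infty$ as $u\uparrow1$.

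By Theorem~\ref{optstrategysamefading}, writing $Y_{(1)}\le\cdots\le Y_{(2M)}$ for the increasing order statistics of $X_1,\dots,X_{2M}$, the optimal sum rate equals $\sum_{m=1}^M\log_2(1+Y_{(m)}+Y_{(2M+1-m)})$, so after the substitution $u=t/2$ in the claimed limit it is equivalent to prove
\begin{align*}
\bar R&=\frac{1}{2M}\sum_{m=1}^M\log_2\!\bigl(1+Y_{(m)}+Y_{(2M+1-m)}\bigr)\ \convp\ I, \\
I&\define\int_0^{1/2}\log_2\!\bigl(1+F^{-1}(u)+F^{-1}(1-u)\bigr)\dif u .
\end{align*}
Here $I<\infty$: using $\log_2(1+a+b)\le\log_2(1+a)+\log_2(1+b)$, $F^{-1}(u)\le F^{-1}(1/2)$ for $u\le1/2$, and the substitution $v=F(z)$ one gets $I\le\tfrac12\log_2(1+F^{-1}(1/2))+\expect[\log_2(1+X)\,\mathbbm 1\{X>F^{-1}(1/2)\}]$, which is finite because $\expect[\log_2(1+X)]\le\log_2(1+\expect[\gamma])<\infty$ by Jensen and $\expect[X]=\expect[\gamma]$.

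I would fix $\delta>0$, pick a small $\epsilon>0$, and split the sum at the index $\lceil\epsilon M\rceil$. For the \emph{bulk} $m>\epsilon M$ we have $m/(2M)\in(\epsilon/2,1/2]$ and $(2M{+}1{-}m)/(2M)\in[1/2,1{-}\epsilon/2)$; the Glivenko--Cantelli theorem for $X_1,\dots,X_{2M}$ gives $\sup_{u\in[\epsilon/2,\,1-\epsilon/2]}|F_{2M}^{-1}(u)-F^{-1}(u)|\to0$ a.s.\ with $F^{-1}$ bounded on that interval, so $Y_{(m)}+Y_{(2M+1-m)}$ converges to $F^{-1}(m/(2M))+F^{-1}(1-m/(2M))$ uniformly over bulk indices (an $O(1/M)$ shift in the argument being absorbed by continuity); uniform continuity of $\log_2(1+\cdot)$ on bounded sets plus a Riemann-sum argument then give that the bulk part $\convp\int_{\epsilon/2}^{1/2}\log_2(1+F^{-1}(u)+F^{-1}(1-u))\dif u$. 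For the \emph{tail} $m\le\lceil\epsilon M\rceil$ we use $Y_{(m)}\le Y_{(2M+1-m)}$ to bound the tail part by $\frac1{2M}\sum_{k\in S}\log_2(1+2Y_{(k)})$ with $S$ the set of the $\lceil\epsilon M\rceil$ largest order statistics, and then, splitting each summand by whether the underlying $X_i$ exceeds a level $\tau$, by $\frac1{2M}\sum_{i=1}^{2M}\log_2(1+2X_i)\,\mathbbm 1\{X_i>\tau\}+\frac{\lceil\epsilon M\rceil}{2M}\log_2(1+2\tau)$, whose $\limsup$ by the strong law is $\expect[\log_2(1+2X)\mathbbm 1\{X>\tau\}]+\tfrac{\epsilon}{2}\log_2(1+2\tau)$. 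Taking $\tau$ large so the first term is $<\delta/2$ and then $\epsilon$ small so the second term and $\int_0^{\epsilon/2}\log_2(1+F^{-1}(u)+F^{-1}(1-u))\dif u$ are each $<\delta/2$ yields $\limsup_M|\bar R-I|\le\delta$ a.s.; since $\delta$ is arbitrary, $\bar R\convp I$ (indeed almost surely).

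The real obstacle is the tail. Since $F^{-1}(1-u)\to\infty$, the largest order statistics $Y_{(2M)},Y_{(2M-1)},\dots$ grow with $M$ (on the order of $\log M$, as $X_i$ has an essentially exponential tail), so each corresponding $\log_2$-term is of order $\log\log M$; one must check that they are too few to affect the limit after dividing by $2M$ --- exactly the uniform-integrability/truncation estimate above, which is where finiteness of $\expect[\gamma]$ (equivalently of $I$) is used. The remaining points --- the bookkeeping of quantile conventions ($Y_{(k)}$ versus $F_{2M}^{-1}(k/(2M))$ up to an $O(1/M)$ node shift) and the passage from Glivenko--Cantelli to uniform convergence of $F_{2M}^{-1}$ on compact subintervals of $(0,1)$ --- are routine.
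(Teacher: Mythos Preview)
Your argument is correct and follows the same skeleton as the paper's own proof: identify the CDF of $X_i=\gamma_i|h_i|^2$ as $F$, invoke the optimal pairing of Theorem~\ref{optstrategysamefading} to write $\bar R$ as a sum over paired order statistics, replace order statistics by population quantiles, and pass to the Riemann integral. The paper simply asserts that the sorted values satisfy $t_i\convp F^{-1}((i-1)/(2M))$ and then writes the sum as the integral; you instead make this rigorous via Glivenko--Cantelli on compact subintervals of $(0,1)$ for the bulk, and a separate truncation/uniform-integrability bound for the extreme indices where $F^{-1}(1-u)\to\infty$. This tail control is the one substantive point the paper glosses over, and your treatment (bounding the top $\lceil\epsilon M\rceil$ terms by a level-$\tau$ split and using $\expect[\log_2(1+2X)]<\infty$) is exactly what is needed to justify interchanging the limit with the Riemann sum near the singularity. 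So: same approach, but your version actually closes the gap the paper leaves open.
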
 
\begin{proof}
Let $t_i\define \gamma_i|h_{i}|^2$ and assume that the samples are sorted as $t_1\geq t_2\geq \cdots\geq t_{2M}$. For large $M$, $t_i$ converges to $F^{-1}\left(\frac{i-1}{2M} \right)$ in probability where $F(\cdot)$ is the cumulative distribution function (cdf) of $\gamma_i|h_i|^2$ and $F^{-1}(\cdot)$ is the inverse with respect to composition. For $h_{i}\sim\mathcal{CN}(0,1)$ and $\gamma_i\sim f(\gamma_i)$, it can be shown that $F(z)$ can be calculated from \eqref{cdfz}.
Thus, for ${M\rightarrow \infty}$
\begin{eqnarray}
\bar{R}=\frac{1}{2M}\sum\limits_{m=1}^M R_m &=&\frac{1}{2M}\sum\limits_{m=1}^M \log_2\left(1+ t_{m}+ t_{2M-m+1}  \right)\nonumber \\
&\convp&\frac{1}{2M}\sum\limits_{m=1}^{M} \log_2\left(1+ F^{-1}\left(\frac{m-1}{2M} \right)+ F^{-1}\left(1-\frac{m}{2M} \right) \right)\nonumber \\
 &=&\int\limits_{0}^{1} \frac{1}{2}\log_2\left(1+F^{-1}(t/2) +F^{-1}(1-t/2)\right) \dif t.
\end{eqnarray}

\end{proof}
 For the PPC strategy, i.e, $\gamma_i=\gamma$, and $M\rightarrow \infty$, we have
 \begin{eqnarray}
 \frac{1}{2M}\sum\limits_{m=1}^M R_m\convp\int\limits_{0}^{1} \frac 12\log_2\left(1+\gamma\log\left( \frac{1}{1-t/2} \right)  +\gamma\log\left( \frac{2}{t} \right)\right) \dif t.
 \end{eqnarray}
 
As a benchmark, we also derive the average rate for random pairing. For large $M$, the average rate of random pairing converges to
\begin{eqnarray}
\int\limits_{0}^{\infty} \int\limits_{0}^{\infty}  \frac 12\log_2(1+t_1+t_2)g(t_1)g(t_2)\dif t_1 \dif t_2,
\end{eqnarray}
 where $g(t)$ is the probability density function of $t=\gamma_i|h_i|^2$ which is given as
 \begin{eqnarray}
 g(t)=\frac{\partial}{\partial t} F(t)=\int_{\mathbb{R}} \frac{\e^{-t/x}}{x} f(x)\dif x.
 \end{eqnarray}
Finally, in the following lemma, we prove that for the considered setting, the optimum pairing strategy for the max-min rate criterion is the same strategy as introduced in Theorem~\ref{optstrategysamefading}. The max-min rate criterion assures to maximize of the minimum sum rate of the subcarriers.
\begin{theorem}
For $| h_1|^2d_1P_1\geq |h_2|^2d_2P_2\geq\cdots\geq |h_{2M}|^2d_{2M}P_{2M}$, the optimum pairing set which is derived from
\begin{eqnarray}
\Pi_{\rm mm}=\argmax\limits_{\Pi} \min\limits_{m} R_m,
\end{eqnarray}
 is
\begin{eqnarray}
\Pi_{\rm mm}=\{(1,2M),(2,2M-1),\ldots,(M,M+1)\}.
\end{eqnarray}
\end{theorem}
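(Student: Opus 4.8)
The plan is to strip the problem down to a purely combinatorial one and then settle it by induction on $M$. Writing $a_i\define |h_i|^2 d_i P_i/\sigma_n^2$, every subcarrier rate is $R_m=\log_2\!\bigl(1+a_{\ell_{2m-1}}+a_{\ell_{2m}}\bigr)$, a strictly increasing function of the pair sum $a_{\ell_{2m-1}}+a_{\ell_{2m}}$. Consequently $\min_m R_m=\log_2\!\bigl(1+\min_m(a_{\ell_{2m-1}}+a_{\ell_{2m}})\bigr)$, so a pairing maximizes $\min_m R_m$ if and only if it maximizes $\min_m\bigl(a_{\ell_{2m-1}}+a_{\ell_{2m}}\bigr)$. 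After this reduction the noise variance and the logarithm disappear, and it suffices to show that for sorted values $a_1\ge a_2\ge\cdots\ge a_{2M}$ (which is exactly the ordering $|h_1|^2d_1P_1\ge\cdots\ge|h_{2M}|^2d_{2M}P_{2M}$ appearing in the statement) the matching $\{(1,2M),(2,2M-1),\ldots,(M,M+1)\}$ attains the largest possible minimum pair sum.

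The inductive step rests on a one-line exchange inequality, in the same spirit as the argument behind Theorem~\ref{optstrategysamefading}. Let $\Pi$ be any pairing that maximizes the minimum pair sum, and suppose in $\Pi$ the index $1$ is paired with $\alpha$ and the index $2M$ with $\beta$, where $\alpha\neq 2M$; then necessarily $\alpha,\beta\in\{2,\ldots,2M-1\}$ and $\alpha\neq\beta$. Replace the two pairs $(1,\alpha),(\beta,2M)$ by $(1,2M),(\alpha,\beta)$, keeping the other $M-2$ pairs. Since $\alpha<2M$ and $\beta>1$ we have $a_\alpha\ge a_{2M}$ and $a_\beta\le a_1$, so the smaller of the two old sums is $a_\beta+a_{2M}$, whereas both new sums obey $a_1+a_{2M}\ge a_\beta+a_{2M}$ and $a_\alpha+a_\beta\ge a_{2M}+a_\beta$. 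As every untouched pair is unchanged, the overall minimum pair sum cannot drop, so we may assume $\Pi$ contains $(1,2M)$.

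Now delete the indices $1$ and $2M$. What remains are $2(M-1)$ sorted values $a_2\ge\cdots\ge a_{2M-1}$, and because the removed pair contributes the fixed sum $a_1+a_{2M}$, maximizing the overall minimum is the same as maximizing the minimum over the remaining $M-1$ pairs. By the induction hypothesis this is achieved by $\{(2,2M-1),(3,2M-2),\ldots,(M,M+1)\}$; appending $(1,2M)$ gives precisely $\Pi_{\rm mm}$, and the base case $M=1$ is trivial.

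The only point that needs care is this reduction to the sub-problem: one must be sure that forcing the pair $(1,2M)$ really preserves global optimality rather than merely the value on a single pair. This is exactly what the exchange inequality guarantees, since all other pairs are left alone and the minimum over the whole pairing is therefore governed by the inequality on the two modified pairs; everything else — the monotonicity reduction and the arithmetic — is routine. As a remark, the statement could alternatively be proved in the ``assume the contrary'' style of Theorem~\ref{optstrategysamefading}, by checking that re-pairing any crossed quadruple $\{w_1,w_2,w_3,w_4\}$ (patterns $\Pi_1$ or $\Pi_2$) as $(w_1,w_4),(w_2,w_3)$ never decreases $\min_m R_m$; that route, however, needs an extra monovariant to ensure the exchanges terminate at $\Pi_{\rm mm}$, which is why the inductive argument above is cleaner.
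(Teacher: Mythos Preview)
Your proof is correct. The paper's own proof is a one-liner pointing back to Theorem~\ref{optstrategysamefading}: assume the claimed pairing is not optimal, locate a mis-paired quadruple $\{w_1,w_2,w_3,w_4\}$, and note that re-pairing it as $(w_1,w_4),(w_2,w_3)$ cannot lower $\min_m R_m$. Your argument rests on the very same exchange inequality but packages it differently, as an induction on $M$ that first pins down the pair $(1,2M)$ and then recurses on the remaining $2(M-1)$ indices. The two approaches are equivalent in spirit; what your induction buys is a clean termination guarantee, which the paper's contradiction sketch leaves implicit (one must separately observe that the only pairing with no ``crossed'' quadruple is $\Pi_{\rm mm}$). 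One small wording point: when you say ``maximizing the overall minimum is the same as maximizing the minimum over the remaining $M-1$ pairs,'' the two objectives are not literally identical --- the overall minimum is $\min\bigl(a_1+a_{2M},\,X\bigr)$ with $X$ the sub-minimum --- but since $a_1+a_{2M}$ is fixed and $\min(\cdot,\cdot)$ is monotone in $X$, any $X$-maximizer is also an overall maximizer, so the conclusion stands.
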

 \begin{proof}
 The proof is similar to the proof of Theorem~\ref{optstrategysamefading}. 
 \end{proof}

\section{MBASS-SAU Scenario}
\label{multiantennasing}
In this section, we consider a general NOMA scheme in which each subcarrier is assigned to $K$ single-antenna users. We focus on one of the subcarriers and investigate the assignment of spatial dimensions. We consider the case of $N\leq K\leq 2N$. For such a setting there are at most $N$ spatial dimensions available. The base station divides the set of users into $K/2$ pairs and applies one-step SIC to each pair. Note that the resources in this scenario are the available spatial dimensions. %In this section we discuss the optimum user pairing strategy.

For ease of notation, assume that $K$ is an even integer. 
We model the received signal at the base station as
\begin{equation}
\matr{y}=\matr{Hx}+\matr{n},
\end{equation}
where $\matr{y}\in\mathbb{C}^N$ is the received vector, $\matr{H}\in\mathbb{C}^{N\times K}$ is the channel matrix, $\matr{x}\in\mathbb{C}^K$ is the transmit vector of all the users and $n\sim \mathcal{CN}(\matr{0},\sigma_n^2\matr{I})$ is additive white Gaussian noise at the base station. Assume that $\matr{x}$ is iid Gaussian with covariance matrix $\matr\Phi$. %Moreover, let $\matr R\define \left(\matr{H}^\dagger \matr{H}\right)^{-1}$.
 The base station first applies a $K\times N$ detection matrix
\begin{eqnarray}
\matr{r}=\matr{T}\matr{y}.
\end{eqnarray}
 This pairing is solely done at the base station and the users need not know the pairing strategy. %In this paper, we limit the analyses and results to uplink channel. 

\subsection{Optimum user pairing}
Assume that the $\ell_{2m-1}$th and the $\ell_{2m}$th user are paired for $m\in\{1,\ldots,K/2\}$. Without loss of generality assume $\ell_{2m-1}<\ell_{2m}$. 
% We define 
%\begin{eqnarray}
%\matr{W}_{m}\define \left[\begin{array}{cc}
%T(\ell_{2m-1},\ell_{2m-1})&T(\ell_{2m-1},\ell_{2m})\\
%T(\ell_{2m},\ell_{2m-1})&T(\ell_{2m},\ell_{2m})
%\end{array}\right].
%\end{eqnarray} 
Let $\matr{T}_{m}\in\mathbb{C}^{2\times N}$ consist of the $\ell_{2m-1}$th and the $\ell_{2m}$th row of $\matr{T}$. Furthermore, let $\matr{H}_{m}$ consist of the $\ell_{2m-1}$th and the $\ell_{2m}$th column of $\matr{H}$ and $\matr{H}_{\setminus m}$ is the matrix $\matr{H}$ when $\matr{H}_{m}$ is excluded from it.
The sum rate of the $\ell_{2m-1}$th and the $\ell_{2m}$th user is calculated as
\begin{eqnarray}
R_m\define R_{\ell_{2m-1}}+R_{\ell_{2m}}=\log_2\det\left(\matr{I}+ \matr{T}_m \matr{H}_{m}\matr\Phi_m\matr{H}_{m}^\dagger\matr{T}_m^\dagger \matr{\Theta}_m^{-1} \right),
\end{eqnarray}
where $\matr\Phi_m$ is the covariance matrix of a vector including the $\ell_{2m-1}$th and the $\ell_{2m}$th user,
\begin{eqnarray}
\matr{\Theta}_m\define \matr{T}_{m} \left(\matr{H}_{\setminus m}\matr\Phi_{\setminus m}\matr{H}_{\setminus m}^\dagger+\sigma_n^2 \matr{I}  \right)\matr{T}_{m}^\dagger,%\define \matr{T}_{m} \matr{B}_m \matr{T}_{m}^\dagger,
\end{eqnarray}
 and $\matr\Phi_{\setminus m}$ is the covariance matrix of the input vector when the $\ell_{2m-1}$th and the $\ell_{2m}$th entry of it are excluded. 
 
The following theorem is used to obtain the maximum total rate of the users. 
\begin{theorem}\label{besteig}
Let $\matr X\in\mathbb{C}^{2\times K}$ and $\matr{A},\matr{B}\in\mathbb{C}^{K\times K}$ be two Hermitian matrices. Furthermore, let $\matr{A}$ be non-negative with rank $2$ and $\matr{B}$ be a full rank positive definite matrix. Then,
\begin{eqnarray}
\max\limits_{\matr{X}} \det \left(\matr{I}+\matr{X} \matr{A} \matr{X}^\dagger \left(\matr{X} \matr{B} \matr{X}^\dagger  \right)^{-1}  \right)=(1+\lambda_1)(1+\lambda_2),
\end{eqnarray}
where $\lambda_1$ and $\lambda_2$ are the nonzero eigenvalues of $\matr{B}^{-1/2} \matr{A} \matr{B}^{-1/2}$.
\end{theorem}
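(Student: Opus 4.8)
The plan is to collapse the optimization over the $2\times K$ matrix $\matr X$ to a $2\times 2$ eigenvalue problem by two successive changes of variable, and then to close it with the Poincaré separation (Cauchy interlacing) theorem.

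\emph{Step 1 (whitening).} Since $\matr B$ is Hermitian positive definite it has a Hermitian positive-definite square root $\matr B^{1/2}$ with inverse $\matr B^{-1/2}$. I would substitute $\matr Y\define \matr X\matr B^{1/2}$; as $\matr X$ ranges over $\mathbb{C}^{2\times K}$, so does $\matr Y$, and one checks immediately that $\matr X\matr B\matr X^\dagger=\matr Y\matr Y^\dagger$ and $\matr X\matr A\matr X^\dagger=\matr Y\matr C\matr Y^\dagger$ with $\matr C\define \matr B^{-1/2}\matr A\matr B^{-1/2}$. Because $\matr C$ is (unitarily) congruent to $\matr A$, it is Hermitian, positive semidefinite, of rank $2$, so it has exactly two positive eigenvalues $\lambda_1\ge\lambda_2>0$ and $K-2$ zero eigenvalues — these $\lambda_1,\lambda_2$ are precisely the quantities in the statement. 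The objective becomes $\det\!\left(\matr I+\matr Y\matr C\matr Y^\dagger(\matr Y\matr Y^\dagger)^{-1}\right)$, maximized over $\matr Y$ of full row rank $2$ (otherwise the inverse does not exist, and the supremum is not approached there).

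\emph{Step 2 (quotient out the left factor).} Any full-row-rank $\matr Y$ factors as $\matr Y=\matr G\matr W$ with $\matr G\in\mathbb{C}^{2\times 2}$ invertible and $\matr W\in\mathbb{C}^{2\times K}$ having orthonormal rows, $\matr W\matr W^\dagger=\matr I$ (e.g.\ from the compact SVD of $\matr Y$). Then $\matr Y\matr Y^\dagger=\matr G\matr G^\dagger$, so $(\matr Y\matr Y^\dagger)^{-1}=\matr G^{-\dagger}\matr G^{-1}$, and
\begin{eqnarray}
\matr I+\matr Y\matr C\matr Y^\dagger(\matr Y\matr Y^\dagger)^{-1}
=\matr I+\matr G\,\bigl(\matr W\matr C\matr W^\dagger\bigr)\,\matr G^{-1}
=\matr G\bigl(\matr I+\matr W\matr C\matr W^\dagger\bigr)\matr G^{-1},
\end{eqnarray}
whose determinant equals $\det\!\left(\matr I+\matr W\matr C\matr W^\dagger\right)$ by similarity invariance. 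Hence the objective depends only on the row space of $\matr Y$, encoded in $\matr W$, and the problem is reduced to maximizing $\det\!\left(\matr I+\matr W\matr C\matr W^\dagger\right)$ over all $\matr W$ with orthonormal rows.

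\emph{Step 3 (interlacing and achievability).} Here $\matr W\matr C\matr W^\dagger$ is $2\times 2$, Hermitian and positive semidefinite, with eigenvalues $\mu_1\ge\mu_2\ge 0$, so $\det\!\left(\matr I+\matr W\matr C\matr W^\dagger\right)=(1+\mu_1)(1+\mu_2)$. By the Poincaré separation theorem applied to the Hermitian $\matr C$ with eigenvalues $\lambda_1\ge\lambda_2\ge 0=\cdots=0$ and the orthonormal-row matrix $\matr W$, one has $\mu_i\le\lambda_i$ for $i=1,2$; since $t\mapsto 1+t$ is positive and increasing, $(1+\mu_1)(1+\mu_2)\le(1+\lambda_1)(1+\lambda_2)$. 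Equality is attained by taking the rows of $\matr W$ to be the conjugate transposes of orthonormal eigenvectors of $\matr C$ associated with $\lambda_1,\lambda_2$, which gives $\matr W\matr C\matr W^\dagger=\diag(\lambda_1,\lambda_2)$; unwinding $\matr Y=\matr G\matr W$ and $\matr X=\matr Y\matr B^{-1/2}$ produces an explicit maximizer (and shows the maximum is genuinely attained, at a full-rank $\matr X$). The only substantive ingredient is the interlacing bound with the correct inequality direction; the two changes of variable are routine, the points needing a little care being the verification that congruence preserves the inertia/rank of $\matr A$ (so $\matr C$ has exactly two nonzero eigenvalues) and that the maximizing $\matr W$ has orthonormal rows so the reduction is not vacuous.
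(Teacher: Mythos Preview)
Your proof is correct, but it follows a genuinely different route from the paper after the common whitening step. Both arguments begin by substituting $\matr Y=\matr X\matr B^{1/2}$ and reducing to the matrix $\matr C=\matr B^{-1/2}\matr A\matr B^{-1/2}$. From there the paper takes a further unitary change of variable $\matr U=\matr Y\matr V_1$ to diagonalize $\matr C$, rewrites the objective as the ratio $\det\!\bigl(\matr U(\matr I+\matr\Lambda)\matr U^\dagger\bigr)/\det(\matr U\matr U^\dagger)$, and then proceeds by an iterative column-peeling argument based on the matrix determinant lemma: writing $\matr U=[\matr U_1\;\matr u_2]$ and comparing the maximum eigenvalues of $(\matr U_1(\matr I+\matr\Lambda_1)\matr U_1^\dagger)^{-1}$ and $(\matr U_1\matr U_1^\dagger)^{-1}$, it shows each column beyond the first two should vanish. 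Your argument is more structural: you quotient out the left ${\rm GL}(2,\mathbb C)$ action via the factorization $\matr Y=\matr G\matr W$ with $\matr W\matr W^\dagger=\matr I$, which collapses the objective to $\det(\matr I+\matr W\matr C\matr W^\dagger)$ in one stroke, and then you close with the Poincar\'e separation (Cauchy interlacing) theorem. What your approach buys is conciseness and a direct appeal to a classical named inequality; what the paper's approach buys is self-containment, since it does not invoke interlacing but instead exposes the monotonicity explicitly through the rank-one update. Either way, the achievability witness is the same: choose the row space of $\matr X\matr B^{1/2}$ to be the span of the two top eigenvectors of $\matr C$.
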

\begin{proof}
From eigenvalue decomposition, we have $\matr{B}^{-1/2}\matr{A}\matr{B}^{-1/2}=\matr{V}_1\matr{\Lambda}\matr{V}_1^\dagger$ where $\matr{V}_1$ is a unitary matrix and $\matr\Lambda$ is a diagonal matrix containing the eigenvalues. By defining $\matr{V}\define\matr{X}\matr{B}^{1/2}$ and $\matr{U}\define \matr{V}\matr{V}_1$, we have
\begin{eqnarray}\label{lemmadet}
\max\limits_{\matr{X}} \det \left(\matr{I}+\matr{X} \matr{A} \matr{X}^\dagger \left(\matr{X} \matr{B} \matr{X}^\dagger  \right)^{-1}  \right)
%&=&\max\limits_{\matr{t}} \frac{\det \left(\matr{t}\left(\matr{B}+ \matr{A}\right)\matr{t}^\dagger\right)}{\det \left( \matr{t} \matr{B} \matr{t}^\dagger  \right)} \nonumber\\
&=&\max\limits_{\matr{V}} \det \left(\matr{I}+\matr{V}\matr{B}^{-1/2} \matr{A} \matr{B}^{-1/2} \matr{V}^\dagger \left(\matr{V}\matr{V}^\dagger \right)^{-1}\right) \nonumber \\
&=&\max\limits_{\matr{V}} \frac{\det \left(\matr{V}\left(\matr{I}+ \matr{B}^{-1/2}\matr{A}\matr{B}^{-1/2}\right)\matr{V}^\dagger\right)}{\det \left( \matr{V} \matr{V}^\dagger  \right)} \nonumber\\
&=&\max\limits_{\matr{U}} \frac{\det \left(\matr{U}\left(\matr{I}+\matr{\Lambda}\right)\matr{U}^\dagger\right)}{\det \left( \matr{U} \matr{U}^\dagger  \right)} .
\end{eqnarray}
$\matr{B}^{-1/2}\matr{A}\matr{B}^{-1/2}$ has only two nonzero eigenvalues denoted by $\lambda_1$ and $\lambda_2$. Without loss of generality assume that these two nonzero eigenvalues are the first and the second eigenvalues. Let $\matr{U}= [\matr{U}_1 \matr{u}_2]$ where $\matr{U}_1\in \mathbb{C}^{2\times {K-1}}$ and $\matr{u}_2\in \mathbb{C}^{2\times 1}$. Using the matrix determinant lemma results in 
\begin{eqnarray}
 \frac{\det \left(\matr{U}\left(\matr{I}+\matr{\Lambda}\right)\matr{U}^\dagger\right)}{\det \left( \matr{U} \matr{U}^\dagger  \right)}=\frac{\det \left(\matr{U}_1\left(\matr{I}+\matr{\Lambda}_1\right)\matr{U}_1^\dagger\right)}{\det \left( \matr{U}_1 \matr{U}_1^\dagger  \right)} \frac{1+\matr{u}_2^\dagger \left(\matr{U}_1\left(\matr{I}+\matr{\Lambda}_1\right)\matr{U}_1^\dagger  \right)^{-1} \matr{u}_2}{1+\matr{u}_2^\dagger \left(\matr{U}_1\matr{U}_1^\dagger  \right)^{-1} \matr{u}_2} .
%\det(\matr{Q}+\matr{e}\matr{e}^\dagger)=(1+\matr{e}^\dagger\matr{Q}^{-1}\matr{e})\det(\matr{Q}),
\end{eqnarray}
 The maximum eigenvalue of $\left(\matr{U}_1\left(\matr{I}+\matr{\Lambda}_1\right)\matr{U}_1^\dagger  \right)^{-1}$ is less than or equal to the maximum eigenvalue of $\left(\matr{U}_1\matr{U}_1^\dagger  \right)^{-1}$ \cite{bhatia1997matrix}. Therefore, the optimum solution for $\matr{u}_2$ is the all-zero vector. One can repeat this strategy for the last $K-3$ columns of $\matr{U}_1$ and show that only the first two columns of the optimum matrix $\matr{U}$ are nonzero and 
\begin{eqnarray}
\max\limits_{\matr{X}} \det \left(\matr{I}+\matr{X} \matr{A} \matr{X}^\dagger \left(\matr{X} \matr{B} \matr{X}^\dagger  \right)^{-1}  \right)=(1+\lambda_1)(1+\lambda_2).
\end{eqnarray}

\end{proof}
Using Theorem~\ref{besteig}, we obtain
\begin{eqnarray}\label{contrell}
\max\limits_{\matr{T}_m} R_m=\log_2(1+\lambda_{m,1})+\log_2(1+\lambda_{m,2}),
\end{eqnarray}
where $\lambda_{m,1}$ and $\lambda_{m,2}$ are the nonzero eigenvalues of the matrix
\begin{eqnarray}\label{matrixeigenphi}
\matr{\Omega}\define\left(\matr{H}_{\setminus m}\matr\Phi_{\setminus m}\matr{H}_{\setminus m}^\dagger+\sigma_n^2 \matr{I}  \right)^{-1/2} (\matr{H}_{m}\matr\Phi_m\matr{H}_{m}^\dagger) \left(\matr{H}_{\setminus m}\matr\Phi_{\setminus m}\matr{H}_{\setminus m}^\dagger+\sigma_n^2 \matr{I}  \right)^{-1/2}.
\end{eqnarray}
Finally, the maximum total rate is derived as
\begin{eqnarray} \label{sumrateportsM}
\sum_{m=1}^{K/2} R_m=\sum_{m=1}^{K/2} \log_2(1+\lambda_{m,1})+\log_2(1+\lambda_{m,2}).
\end{eqnarray}

To obtain the optimum pair, we form a $K\times K$ cost matrix whose $(i,j)$th entry denotes the contribution of the $i$th and $j$th users in the total rate which is given in \eqref{contrell} for the $\ell_{2m-1}$th and the $\ell_{2m}$th user. Note that the contribution of the $\ell_{2m-1}$th and the $\ell_{2m}$th user in the total rate should be independent from how the other users, i.e., all the users except the $\ell_{2m-1}$th and the $\ell_{2m}$th user, are grouped. If this condition does not hold, the problem cannot be modeled as a linear assignment problem with a cost matrix. To have such a condition, the eigenvalues $\lambda_{m,1}$ and $\lambda_{m,2}$ should be the same for all pairing schemes in which the $\ell_{2m-1}$th and the $\ell_{2m}$th user are paired.  It is straightforward to observe from \eqref{matrixeigenphi} that this condition is fulfilled.

Having the cost matrix, the problem becomes a linear assignment, and we find the optimum user pairing set using the Hungarian algorithm. The $(\ell_{2m-1},\ell_{2m})$th and the $(\ell_{2m},\ell_{2m-1})$th entries of the cost matrix are equal to $\log_2(1+\lambda_{m,1})+\log_2(1+\lambda_{m,2})$. The Hungarian algorithm gives us $K/2$ pairs with the maximum total rate.

Note that the scheme introduced in this section is not in power domain but in spatial domain. The pairing is done at the base station, and the users need not do anything. At the base station, the users are grouped into pairs and interference cancellation is applied to each pair. The approach introduced here can be used in massive MIMO uplink.

\subsection{Large system analysis}
In this subsection, we analyze the NOMA scheme presented in the previous subsection in the large system limit, i.e., $N,K\rightarrow \infty$ and constant $\alpha\define K/N$. The goal is to calculate the eigenvalues $\lambda_{m,1}$ and $\lambda_{m,2}$ in the large system limit. We confine the analysis to the case of PPC and leave the case of EP for future works. For PPC, we show that optimum user pairing performs identical to random pairing in the large system limit. However, in the results section, we demonstrate that for finite $K,N$, optimum user pairing performs much better than random pairing. Furthermore, in the results section, using computer simulation we show that the same behavior is observed for EP.

%the following lemma presents the large system analysis.

%For sake of comparison, we also consider the case of random assignment. We show that for iid channel matrices in the large system limit, i.e., $K,N\rightarrow \infty$, both $\lambda_1$ and $\lambda_2$ converge to a constant, thus there is no need for optimization. In fact, the performance of random user pairing converges to the performance of optimum user pairing in the large system limit. For the sake of simplicity we assume that the users have the same SNR which is equal to $\gamma$.  

For PPC, we set $P_i=1/d_i$. Therefore, the path loss effect is canceled out. To analyze such a NOMA system, one can consider a similar NOMA system in which the users have identical power and the channel matrix represents the small scale fading effect. In the following theorem, we present the main results of this subsection.

\begin{theorem}\label{lemmafree}
Let $\matr{H}$ be an iid matrix whose entries have zero mean and variance of $1/N$. Let $\matr{\Phi}/\sigma_n^2=\gamma \matr{I}$. Then, for $K,N\rightarrow \infty$, constant $\alpha=K/N$, PPC, and any pairing method, we have
\begin{eqnarray}
 \lambda_{m,1},\lambda_{m,2}\convp
\sqrt{\frac{(1-\alpha)^2\gamma^2}{4}+\frac{(1+\alpha)\gamma}{2} +\frac{1}{4}}-\frac{1}{2}+\frac{(1-\alpha)\gamma}{2}.
\end{eqnarray}
\end{theorem}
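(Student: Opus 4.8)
\emph{Proof idea.} The plan is to collapse the $N\times N$ matrix $\matr{\Omega}$ of \eqref{matrixeigenphi} to a $2\times2$ matrix and then apply standard large random-matrix concentration. First I substitute $\matr\Phi_m=\gamma\sigma_n^2\matr{I}$ and $\matr\Phi_{\setminus m}=\gamma\sigma_n^2\matr{I}$ into \eqref{matrixeigenphi}; the factors $\sigma_n^2$ cancel, leaving
\begin{eqnarray}
\matr{\Omega}=\gamma\,\matr{M}^{-1/2}\matr{H}_m\matr{H}_m^\dagger\matr{M}^{-1/2},\qquad
\matr{M}\define\matr{I}+\gamma\,\matr{H}_{\setminus m}\matr{H}_{\setminus m}^\dagger .
\end{eqnarray}
Since $\matr{\Omega}$ has rank at most $2$, its nonzero eigenvalues coincide with the eigenvalues of the $2\times2$ matrix $\matr{G}\define\gamma\,\matr{H}_m^\dagger\matr{M}^{-1}\matr{H}_m$. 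Hence it suffices to show that $\matr{G}\convp\gamma\beta\,\matr{I}$, where $\beta$ is the (deterministic) limit of $\tfrac1N\tr\matr{M}^{-1}$, and then to identify $\beta$.

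Denote the two columns of $\matr{H}_m$ by $\matr{h}_1,\matr{h}_2$; they are i.i.d.\ with zero mean and per-entry variance $1/N$ and, for a pairing rule fixed in advance, independent of $\matr{M}$. The trace lemma for quadratic forms in such vectors gives $\matr{h}_i^\dagger\matr{M}^{-1}\matr{h}_i-\tfrac1N\tr\matr{M}^{-1}\convp0$ and $\matr{h}_1^\dagger\matr{M}^{-1}\matr{h}_2\convp0$, so $\matr{G}\convp\gamma\beta\,\matr{I}$, and consequently both limiting eigenvalues of $\matr{G}$ (a scalar multiple of the identity in the limit) equal $\gamma\beta$.

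To evaluate $\beta$, I use $\matr{M}^{-1}=\matr{I}-\gamma\,\matr{H}_{\setminus m}\matr{H}_{\setminus m}^\dagger\matr{M}^{-1}$, which after taking normalized traces reads $\tfrac1N\tr\matr{M}^{-1}=1-\tfrac{\gamma}{N}\sum_{j}\matr{h}_j^\dagger\matr{M}^{-1}\matr{h}_j$, the sum being over the $K-2$ columns $\matr{h}_j$ of $\matr{H}_{\setminus m}$. The matrix inversion lemma gives $\matr{h}_j^\dagger\matr{M}^{-1}\matr{h}_j=\matr{h}_j^\dagger\matr{M}_j^{-1}\matr{h}_j\,/\,(1+\gamma\,\matr{h}_j^\dagger\matr{M}_j^{-1}\matr{h}_j)$ with $\matr{M}_j\define\matr{M}-\gamma\,\matr{h}_j\matr{h}_j^\dagger$; since $\matr{h}_j^\dagger\matr{M}_j^{-1}\matr{h}_j\convp\beta$ (trace lemma plus a rank-one perturbation estimate), each summand converges to $\beta/(1+\gamma\beta)$. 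Using $(K-2)/N\to\alpha$ yields the fixed-point equation
\begin{eqnarray}
\beta+\frac{\alpha\gamma\beta}{1+\gamma\beta}=1,\qquad\text{i.e.}\qquad \gamma\beta^2+\bigl(1+(\alpha-1)\gamma\bigr)\beta-1=0 ,
\end{eqnarray}
whose positive root (the only admissible one, since $\beta\in(0,1]$) satisfies $\gamma\beta=\tfrac12\bigl(\sqrt{(1-\alpha)^2\gamma^2+2(1+\alpha)\gamma+1}-1-(\alpha-1)\gamma\bigr)$, which rearranges exactly into the expression in the theorem. Since this value depends on $m$ only through the fact that $\matr{H}_m$ is a generic $N\times2$ i.i.d.\ block of $\matr{H}$, optimum and random pairing share the same limit.

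The step I expect to be the main obstacle is making the ``any pairing method'' claim rigorous: for optimum pairing the chosen columns $\matr{h}_1,\matr{h}_2$ are selected using all of $\matr{H}$, so the independence of $\matr{H}_m$ and $\matr{M}$ invoked above fails for the selected pair. The way I would repair it is to establish the quadratic-form and trace concentrations with a deviation probability small enough (exponentially small for sub-Gaussian entries, or $o(K^{-2})$ under suitable moment conditions) that a union bound over all $\binom{K}{2}$ candidate pairs forces every pair---hence the one picked by any rule---to concentrate at $\gamma\beta$. The remaining ingredient, convergence of $\tfrac1N\tr\matr{M}^{-1}$ to the solution of the fixed-point equation, is the Marchenko--Pastur/Silverstein theorem and is otherwise routine.
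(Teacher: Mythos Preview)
Your argument is essentially the paper's: reduce the nonzero eigenvalues of $\matr{\Omega}$ to those of the $2\times 2$ matrix $\matr{H}_m^\dagger(\matr{H}_{\setminus m}\matr{H}_{\setminus m}^\dagger+\matr{I}/\gamma)^{-1}\matr{H}_m$, use the trace lemma to show the off-diagonal entries vanish and the diagonals concentrate at the Stieltjes transform of the Marchenko--Pastur law evaluated at $-1/\gamma$, and read off the closed form. The only differences are cosmetic: the paper quotes the Stieltjes transform formula directly from Tulino--Verd\'u rather than rederiving your fixed-point equation, and it does not discuss the ``any pairing'' dependence issue at all---so your proposed union-bound patch is actually \emph{more} careful than the published proof on that point.
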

\begin{proof}
The proof is given in Appendix~\ref{prooflemmafree}.
\end{proof}
Note that Theorem~\ref{lemmafree} is very general and independent of the distribution of the entries of $\matr{H}$. In fact, Theorem~\ref{lemmafree} results from channel hardening, i.e.\ in the large system limit of PPC there is no channel better than any other. From Theorem~\ref{lemmafree}, it is concluded that for $K,N\rightarrow \infty$, we have
\begin{eqnarray}
 \frac{1}{K}\sum_{m=1}^{K/2} R_m\convp  \frac{2}{\alpha}\log_2\left(\frac 12+ 
 \sqrt{\frac{(1-\alpha)^2\gamma^2}{4}+\frac{(1+\alpha)\gamma}{2} +\frac{1}{4}}+\frac{(1-\alpha)\gamma}{2}\right).
\end{eqnarray}
Theorem~\ref{lemmafree} shows that in the large system limit, the total rate of PPC converges to a limit independent of the pairing strategy. This limit is equal to the total rate of optimum linear detection and random pairing. However, in finite sizes, in the numerical results section we show that optimum user pairing achieves higher total rate than random pairing. We conjecture that the gain of optimal pairing in NOMA remains in the large system limit if the size of the groups scales with $K$. In such a case, interference cancellation with higher steps is required.

\section{MBASS-MAU Scenario}
\label{rankdef}

Consider the case that all user terminals and the base station have multi-antenna arrays. Each user has $L$ antennas and the number of antennas at the base station is $N$. It is assumed that all the $K$ users communicate with the base station over a single subcarrier. For such an uplink channel, a NOMA scheme using a signal alignment technique is proposed in \cite{ding2016generalTWCOM} in which the users are divided into $K/2$ pairs. Then, the inter-pair interference is mitigated using a combination of beamforming at the user terminals and linear detection at the base station if $K\leq 2N<4L$ and the channels between the users and the base station are full rank. The base station then applies one-step SIC for every pair. It is shown in \cite{ding2016generalTWCOM} that this NOMA system can be decomposed into $K/2$ orthogonal single-antenna NOMA systems.

In this section, we use the same method introduced in the last section to propose a new NOMA uplink scheme in which each user terminal uses a beamforming vector to transmit a single data stream. We show that this method has higher performance than the NOMA scheme with signal alignment proposed in \cite{ding2016generalTWCOM}. Furthermore, in contrast to the scheme with signal alignment, in our scheme users only need to know their own channel. %in which the users should know all the channel information.

Assume that the $k$th user sends the data stream $u_k$ by applying the beamforming vector $\matr b_k\in \mathbb{C}^{L\times 1}$. The received vector at the base station reads 
\begin{eqnarray}
\matr{y}=\sum\limits_{k=1}^K \matr{G}_{k}\matr{b}_ku_k+\matr{n},
\end{eqnarray}
where $\matr{G}_{k}$ is the channel matrix between the $k$th user and the base station.
In \cite{ding2016generalTWCOM}, full rank channel matrices and $2L>N$ are assumed to guarantee a user pairing solution. We show that none of these assumptions are required for the method introduced in this section. Note that we have already shown in the previous section that this NOMA method works even for single-antenna users.

It is assumed that the users only know their own channel and the base station knows all the channel matrices. We use a simple beamforming method at the user terminals and leave the optimum beamforming strategy for the future works. The $m$th user sets 
\begin{eqnarray}
\matr{b}_m=\matr{v}_m,
\end{eqnarray}
where $\matr{v}_m$ is the eigenvector of $\matr{G}_m^\dagger \matr{G}_m$ which corresponds to the maximum eigenvalue. Letting
\begin{eqnarray}
\breve{\matr{H}}\define[\matr{G}_1\matr{b}_1,\ldots,\matr{G}_K\matr{b}_{K}]
\end{eqnarray}
and 
\begin{eqnarray}
{\matr{u}}\define[u_1,\ldots,u_{K}]^\T,
\end{eqnarray}
the received signal at the base station is
\begin{eqnarray}\label{singlmodd}
\matr{y}=\breve{\matr{H}}{\matr{u}}+\matr{n}.
\end{eqnarray}
Eq. \eqref{singlmodd} describes the same channel model as in the case of single-antenna user terminals. Thus, we apply the same detection technique as described in the previous section to pair the users for the case of $K=2N$. The optimum user pairing set can be obtained using the Hungarian algorithm. Note that in this NOMA scheme, the channel matrices are not required to be full rank. Furthermore, $L$ can be any integer. In the numerical results section, we compare this method against the NOMA scheme with signal alignment.

\section{Numerical and simulation results}
\label{results}
In this section, some numerical results for the proposed NOMA schemes explained in the previous sections are presented. The cell is assumed to be a disc with radius of $R_d=100$ and $r_0$ is set to $1$. Both the PPC and EQ strategies are considered. We define the average transmit power of the users as
\begin{eqnarray}
\bar{P}\define\expect P_i,
\end{eqnarray}
and the average path loss of the users as
\begin{eqnarray}
\bar{d}\define\expect d_i.
\end{eqnarray}
To have a fair comparison, we use the parameter $\bar{\gamma}\define \frac{\bar{d}\bar{P}}{\sigma_n^2}$ as a measure for SNR at the receiver. Note that $\bar{\gamma}$ is different than $\expect \gamma_i$. We do not use the measure $\expect \gamma_i=\expect d_i P_i/\sigma_n^2$ for SNR since the power of the users appears with the weight of the path loss coefficients in it.

Besides the results which are obtained from analytical formulas, all the rates are derived using expectation over both the users' positions in the cell and the small scale fading coefficients. Note that the shadowing effect is neglected here. The rates are calculated using computer simulation with enough samples.

As a reference for comparison, an orthogonal method based on TDMA is used in which the $i$th user is assigned $\zeta_i$ portion of the total time at a given subcarrier. We consider two TDMA cases: 1) when the users at each subcarrier are assigned $\zeta_i=1/K$ portion of the total time, 2) when $\zeta_i$s are optimized for all the subcarriers.

To measure the fairness in different scenarios, we use the Jain index given by \cite{jain1984quantitative}
\begin{eqnarray}
\text{Jain's fairness index}=\frac{\left(\sum\limits_{k=1}^{MK} R_{k}   \right)^2}{MK \sum\limits_{k=1}^{MK} R_k^2},
\end{eqnarray}
which is between 0 and 1. Note that the maximum fairness is obtained when all the rates are equal.

%In the following curves, the bases of the outer logarithms are $2$ and the inner logarithms are the natural logarithm, e.g., $\log_2(\log(M))$.
\subsection{SAMS in frequency-selective fading}\label{resulselective}
We present the results for the sub-optimum user pairing method introduced in Section~\ref{singledifffading}. Note that the results here are lower bounds for the total rate, since we have used the suboptimum user pairing methods based on the bounds in Lemma~\ref{boundsrateMAC}.
Fig.~\ref{singleuserselectivedivi5_10_20} shows the total rate of the users normalized by $M\log_2(1+2\bar{\gamma}\log M)$ for $M=10$ versus $\bar{\gamma}$. It is observed that in both the EP and PPC cases, the proposed pairing methods perform much better than random pairing. The results for the pairing methods based on the first and the second bound in Lemma~\ref{boundsrateMAC} are almost the same. Thus, in the remaining figures, we only consider the method based on the first bound. It is also observed in Fig.~\ref{singleuserselectivedivi5_10_20} that the NOMA scheme with EP has a better performance than the one with PPC. The reason is that for PPC, the users compensate the path losses, hence the received powers of the users at the base station differ less and accordingly the gain of NOMA reduces. Another observation in Fig.~\ref{singleuserselectivedivi5_10_20} is that for the case of $\bar{\gamma}\rightarrow \infty$, the normalized total rate converges to 1 which confirms the predictions of Section~\ref{singledifffading}. %Note that high $\bar{\gamma}$ is shown just for sake of 

\begin{figure}[t!]
\centering
\resizebox{.6\linewidth}{!}{
\pstool[width=.55\linewidth]{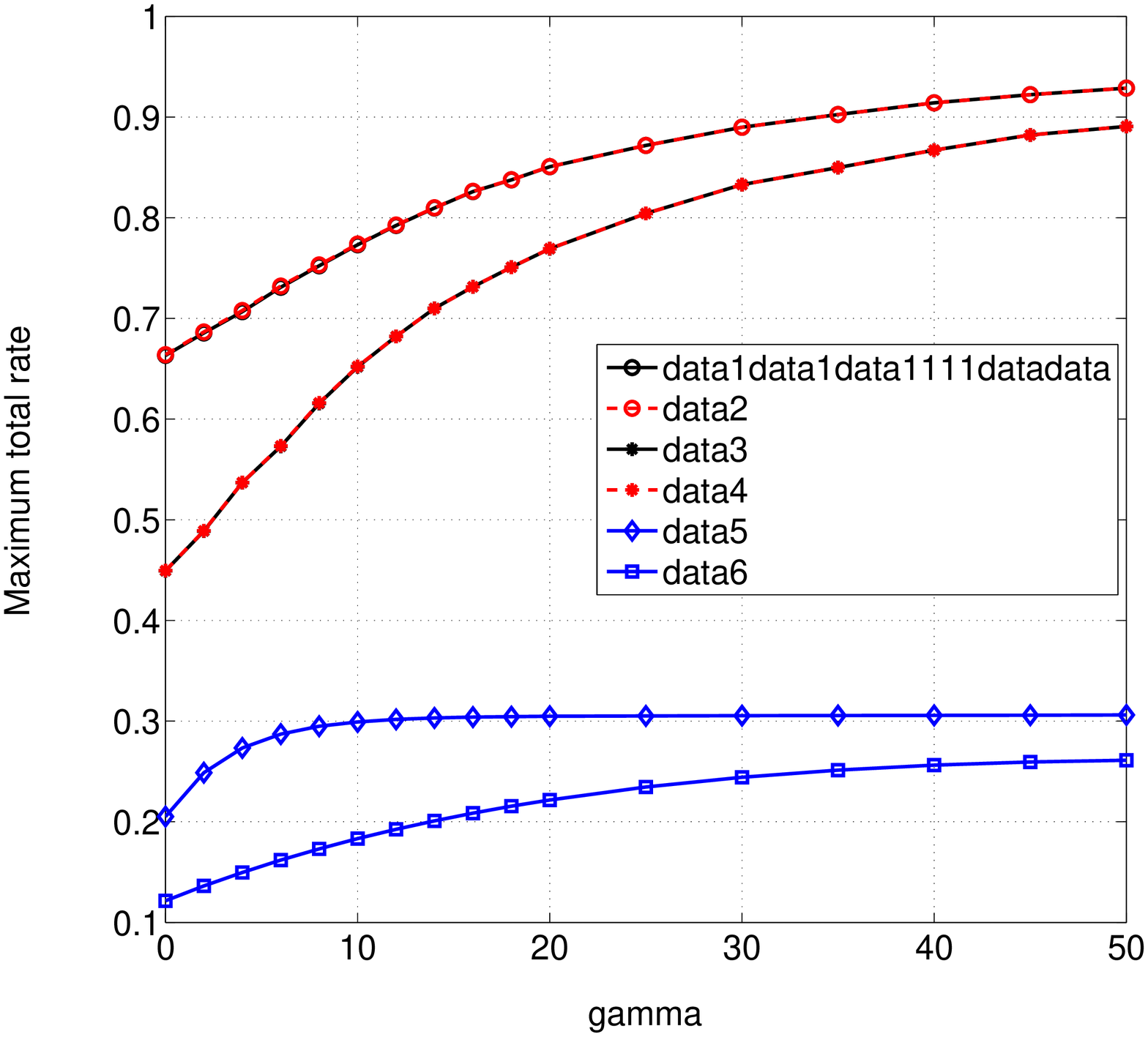}{
\psfrag{data1data1data1111datadata}[l][l][.85]{First bound, EP}
\psfrag{data2}[l][l][.85]{Second bound, EP}
\psfrag{data3}[l][l][.85]{First bound, PPC}
\psfrag{data4}[l][l][.85]{Second bound, PPC}
\psfrag{data5}[l][l][.85]{Random pairing, EP}
\psfrag{data6}[l][l][.85]{Random pairing, PPC}

\psfrag{Maximum total rate}[c][c][.9]{$\text{Total rate}/\left(M\log_2\left(1+2\bar{\gamma} \log M  \right) \right)$}
\psfrag{gamma}[c][c][1]{$\bar{\gamma}~[{\rm dB}]$}

\psfrag{0}[c][l][.8]{$0$}
\psfrag{5}[c][c][.8]{$5$}
\psfrag{10}[c][c][.8]{$10$}
\psfrag{15}[c][c][.8]{$15$}
\psfrag{20}[c][c][.8]{$20$}
\psfrag{30}[c][c][.8]{$30$}
\psfrag{40}[c][c][.8]{$40$}
\psfrag{50}[c][c][.8]{$50$}

\psfrag{0.1}[c][c][.7]{$0.1\hspace{1mm}$}

\psfrag{0.2}[c][c][.7]{$0.2\hspace{1mm}$}
\psfrag{0.3}[c][c][.7]{$0.3\hspace{1mm}$}
\psfrag{0.4}[c][c][.7]{$0.4\hspace{1mm}$}
\psfrag{0.5}[c][c][.7]{$0.5\hspace{1mm}$}
\psfrag{0.6}[c][c][.7]{$0.6\hspace{1mm}$}
\psfrag{0.7}[c][c][.7]{$0.7\hspace{1mm}$}
\psfrag{0.8}[c][c][.7]{$0.8\hspace{1mm}$}
\psfrag{0.9}[c][c][.7]{$0.9\hspace{1mm}$}
\psfrag{1}[c][c][.7]{$1\hspace{1mm}$}
\psfrag{1.1}[c][c][.7]{$1.1\hspace{1mm}$}
\psfrag{1.2}[c][c][.7]{$1.2\hspace{1mm}$}

}}
\caption{The total rate normalized by $M\log_2\left(1+2\bar{\gamma} \log M  \right) $ for SAMS in frequency-selective fading and $M=10$.}

\label{singleuserselectivedivi5_10_20}
\end{figure}
%fig1

Next, we investigate the total rate versus the number of subcarriers. $\bar{\gamma}$ is set to $5~{\rm dB}$. The total rates normalized by $M\log_2\log M$ are plotted in Fig.~\ref{singleuserselectiveM}. 
The results in both cases converge to $1$ very slowly. Note that the low convergence rate is due to the terms containing $\log_2 \log M$ in the normalized total rate. The results for random pairing are also plotted. As observed, the normalized total rates for random pairing converge to $0$ since the total rates in this case scale with $M$. As for the previous figure, the EP method performs better than PPC. %This confirms that the EP method performs better than the PPC method in NOMA systems.
%fig2
\begin{figure}[t!]
\centering
\resizebox{.6\linewidth}{!}{
\pstool[width=.55\linewidth]{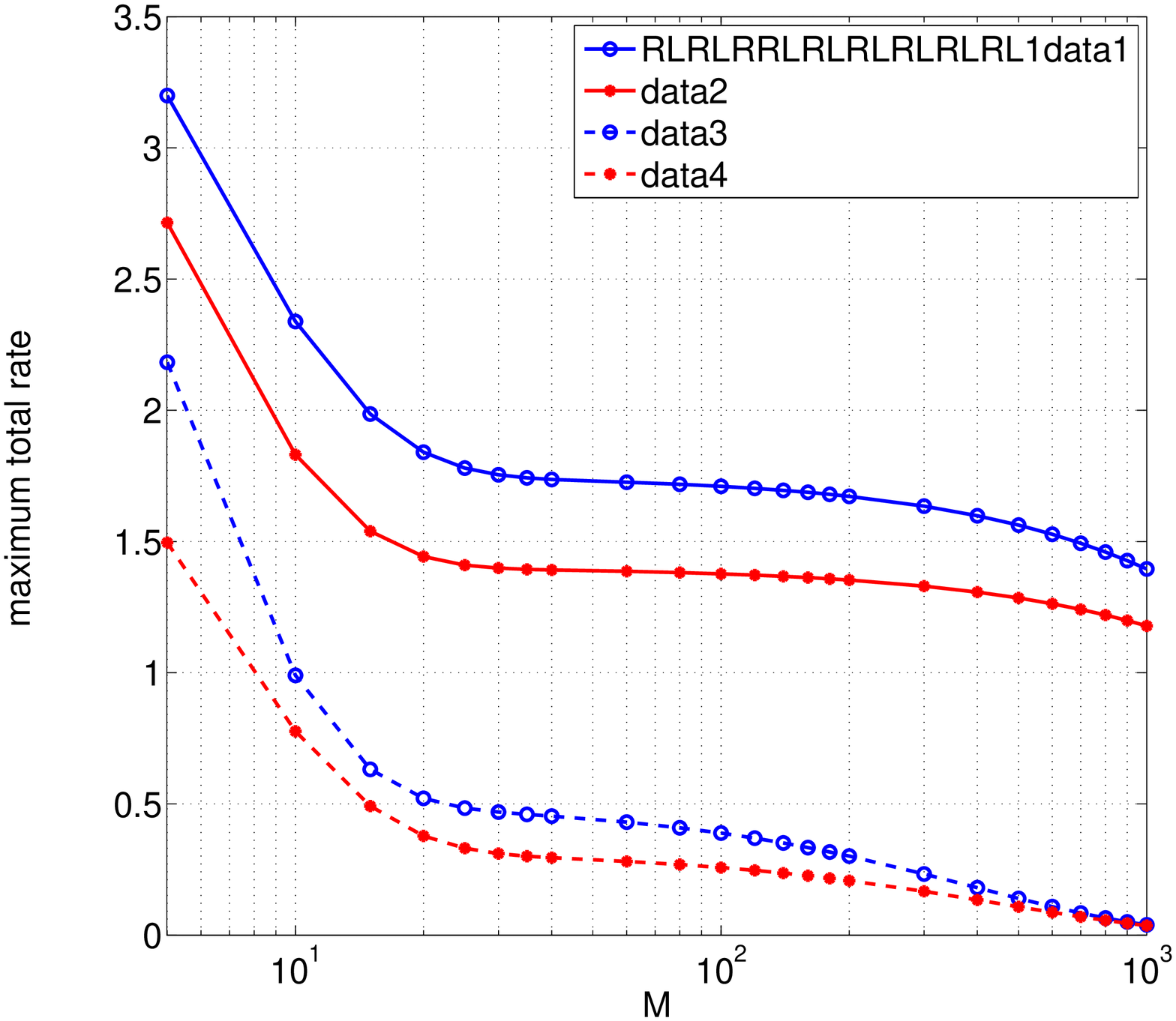}{
\psfrag{RLRLRRLRLRLRLRLRL1data1}[l][l][.85]{First bound, EP}
\psfrag{data2}[l][l][.85]{First bound, PPC}
\psfrag{data3}[l][l][.85]{Random pairing, EP}
\psfrag{data4}[l][l][.85]{Random pairing, PPC}
\psfrag{maximum total rate}[c][c][.9]{$\text{Total rate}/\left(M\log_2 \log M   \right)$}
\psfrag{M}[c][c][1]{$M$}

\psfrag{0}[c][c][.8]{$0$}
\psfrag{200}[c][c][.8]{$200$}
\psfrag{400}[c][c][.8]{$400$}
\psfrag{600}[c][c][.8]{$600$}
\psfrag{800}[c][c][.8]{$800$}
\psfrag{1000}[c][c][.8]{$1000$}
\psfrag{10}[c][c][.7]{$10$}

\psfrag{0.5}[c][c][.7]{$0.5$}
\psfrag{1}[c][c][.7]{$1$}
\psfrag{1.5}[c][c][.7]{$1.5$}
\psfrag{2}[c][c][.7]{$2$}
\psfrag{2.5}[c][c][.7]{$2.5$}
\psfrag{3}[c][c][.7]{$3$}
\psfrag{3.5}[c][c][.7]{$3.5$}
\psfrag{4}[c][c][.7]{$4$}
\psfrag{4.5}[c][c][.7]{$4.5$}
\psfrag{5}[c][c][.7]{$5$}
}}
\caption{The total rate normalized by $M\log_2\log M $ for SAMS in frequency-selective fading and $\bar{\gamma}=5$~dB versus $M$.}

\label{singleuserselectiveM}
\end{figure}

In Fig.~\ref{SelreqPowerDiscgamma5_10}, we compare the NOMA scheme with two OMA schemes based on TDMA which achieves the same set of rates. $M$ is set to $10$. The required total power normalized by the total power of the NOMA scheme is selected as performance measure. This measure shows how much power each method requires to reach the same rate sets as in the NOMA scheme. It is observed that the NOMA schemes are much more power efficient than the two OMA schemes. Note that the NOMA scheme does not need any time management which is very critical in the case of the TDMA method with optimum $\zeta_i$. Furthermore, it is observed that the TDMA methods require more power in the EP case. %Note that the optimum TDMA scheme is too complicated to be implemented in practice since the user terminals should send in a portion of transmit slot which depends on the channel. This makes the synchronization of the system too complicated.
%fig3

\begin{figure}[t!]
\centering
\resizebox{.6\linewidth}{!}{
\pstool[width=.58\linewidth]{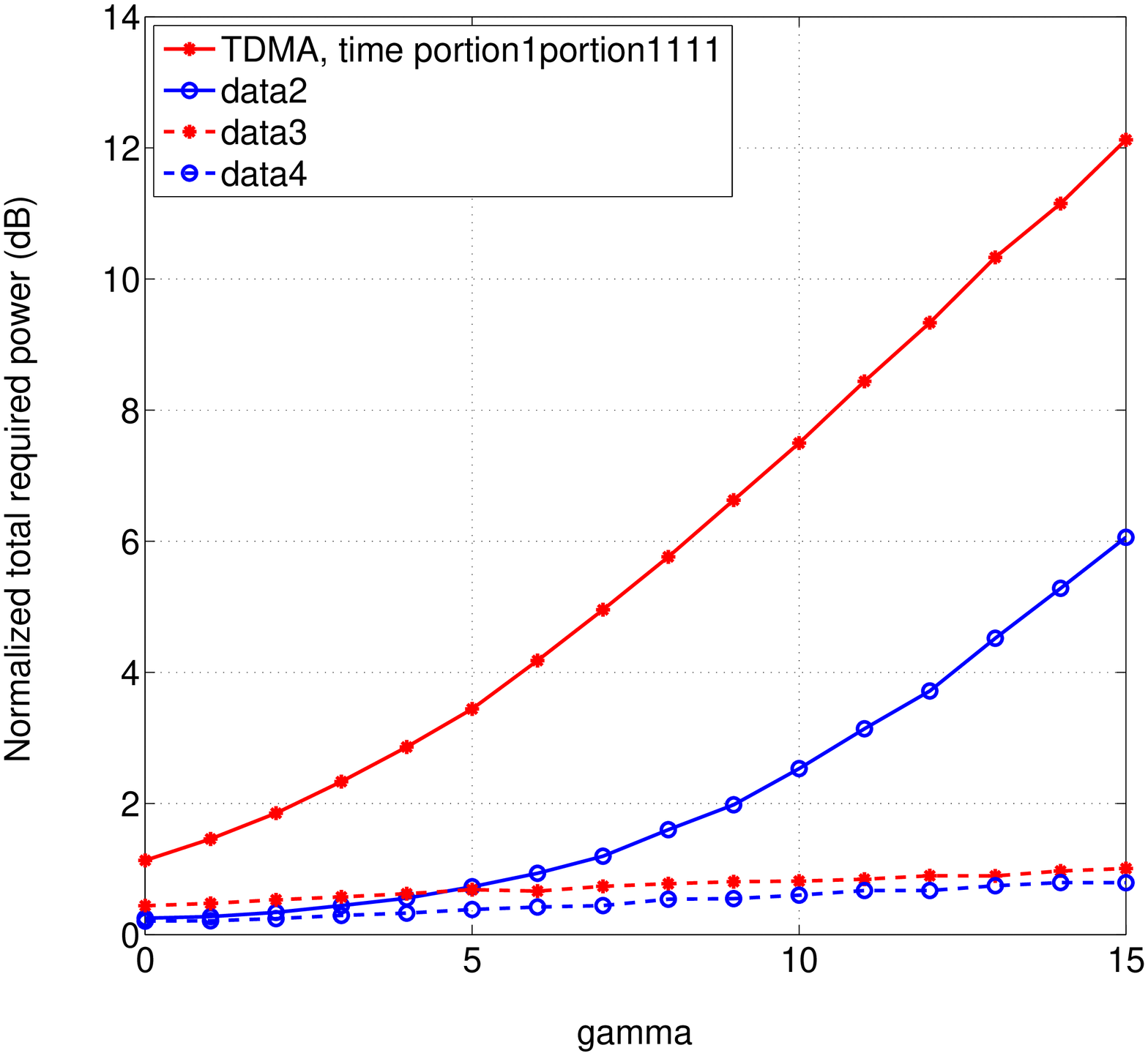}{
\psfrag{TDMA, time portion1portion1111}[l][l][.85]{EP, TDMA, $\zeta_i=0.5$}
\psfrag{data2}[l][l][.85]{PPC, TDMA, $\zeta_i=0.5$}
\psfrag{data3}[l][l][.85]{EP, TDMA, optimum $\zeta_i$}
\psfrag{data4}[l][l][.85]{PPC, TDMA, optimum $\zeta_i$}
\psfrag{Normalized total required power (dB)}[c][c][.9]{Normalized total required power [dB]}
\psfrag{gamma}[c][c][1]{$\bar{\gamma}~[{\rm dB}]$}

\psfrag{0}[c][c][.8]{$0$}
\psfrag{2}[c][c][.8]{$2$}
\psfrag{4}[c][c][.8]{$4$}
\psfrag{6}[c][c][.8]{$6$}
\psfrag{8}[c][c][.8]{$8$}
\psfrag{10}[c][c][.8]{$10$}
\psfrag{12}[c][c][.8]{$12$}
\psfrag{14}[c][c][.8]{$14$}

\psfrag{0}[c][c][.7]{$0$}
\psfrag{5}[c][c][.7]{$5$}
\psfrag{10}[c][c][.7]{$10$}
\psfrag{15}[c][c][.7]{$15$}
\psfrag{20}[c][c][.7]{$20$}

}}
\caption{The normalized total required power of two OMA schemes for SAMS in frequency-selective fading and $M=10$. The powers at each case are normalized with the total power of the corresponding NOMA scheme.}

\label{SelreqPowerDiscgamma5_10}
\end{figure}

Next, we plot the Jain fairness index versus the number of sub-carriers for different scenarios. The results are shown in Fig.~\ref{fairness_selectivefading}. As observed, the proposed pairing method performs much better in terms of fairness than random pairing method. Furthermore, PPC performs better than EP which is in fact clear because in PPC the path loss is compensated by power control. The main message of this figure is that our proposed user pairing method not only increases the sum rate but also improves fairness in comparison to random pairing.

\begin{figure}[t!]
\centering
\resizebox{.6\linewidth}{!}{
\pstool[width=.58\linewidth]{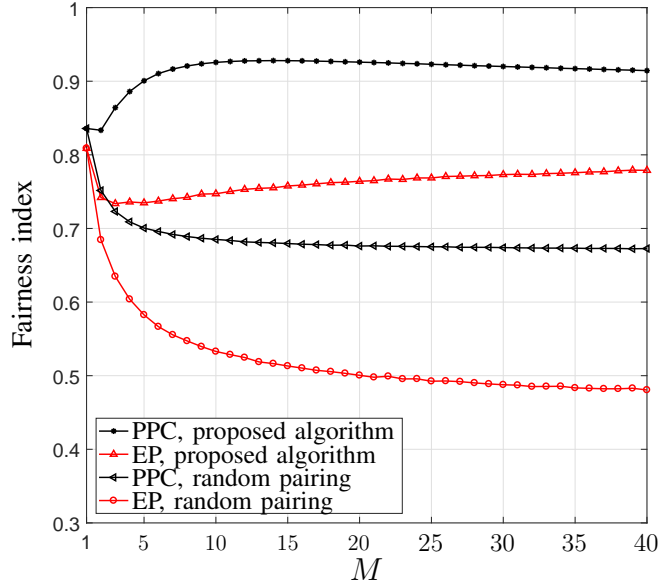}{
\psfrag{PPC, proposed algorithmalgorithm}[l][l][.8]{PPC, proposed algorithm}
\psfrag{EP, proposed algorithm}[l][l][.8]{EP, proposed algorithm}
\psfrag{PPC, random pairing}[l][l][.8]{PPC, random pairing}
\psfrag{EP, random pairing}[l][l][.8]{EP, random pairing}
\psfrag{Fairness index}[c][c][.9]{Fairness index}
\psfrag{M}[c][c][1]{$M$}

\psfrag{}[c][c][.8]{$1$}
\psfrag{5}[c][c][.8]{$5$}
\psfrag{10}[c][c][.8]{$10$}
\psfrag{15}[c][c][.8]{$15$}
\psfrag{20}[c][c][.8]{$20$}
\psfrag{25}[c][c][.8]{$25$}
\psfrag{30}[c][c][.8]{$30$}
\psfrag{35}[c][c][.8]{$35$}
\psfrag{40}[c][c][.8]{$40$}

\psfrag{0}[c][c][.7]{$0$}
\psfrag{5}[c][c][.7]{$5$}
\psfrag{10}[c][c][.7]{$10$}
\psfrag{15}[c][c][.7]{$15$}
\psfrag{20}[c][c][.7]{$20$}

}}
\caption{The Jain fairness index versus the number of sub-carriers in SAMS for PPC and EP power control methods.}

\label{fairness_selectivefading}
\end{figure}

\subsection{SAMS in frequency-flat fading}
%flatdiscEP ro bara 2 ta gamma run mikonim bar hasb e M
In this subsection, the results for the case of frequency-flat channels in Section~\ref{samfadingsec} are presented. We compare the performance of optimum pairing against random pairing. The results are shown in Fig.~\ref{flatgamma5_10} for $\bar{\gamma}=15$~dB and the EP case. The simulation results together with the analytical results $M\rightarrow \infty$ are plotted. It is observed that optimum pairing has better performance than random pairing. As a benchmark, the results for full interference cancellation are also plotted which is, in fact, the ergodic capacity per user. In this method, all the users transmit on all subcarriers and the base station uses full interference cancellation. It is easy to show that in such a case for $M\rightarrow \infty$, the ergodic capacity per user converges to $\frac{1}{2}\log_2(1+2\bar{\gamma})\approx 3$. From Fig.~\ref{flatgamma5_10}, the rate per user of NOMA with optimum pairing is worse than the ergodic capacity per user. Note, however, that NOMA with one-step interference cancellation has much lower complexity than full interference cancellation. %The capacity per user is also plotted which is achieved when all the users transmit at all the subcarriers and the base station users optimal successive cancellation.
\begin{figure}[t!]
\centering
\resizebox{.6\linewidth}{!}{
\pstool[width=.59\linewidth]{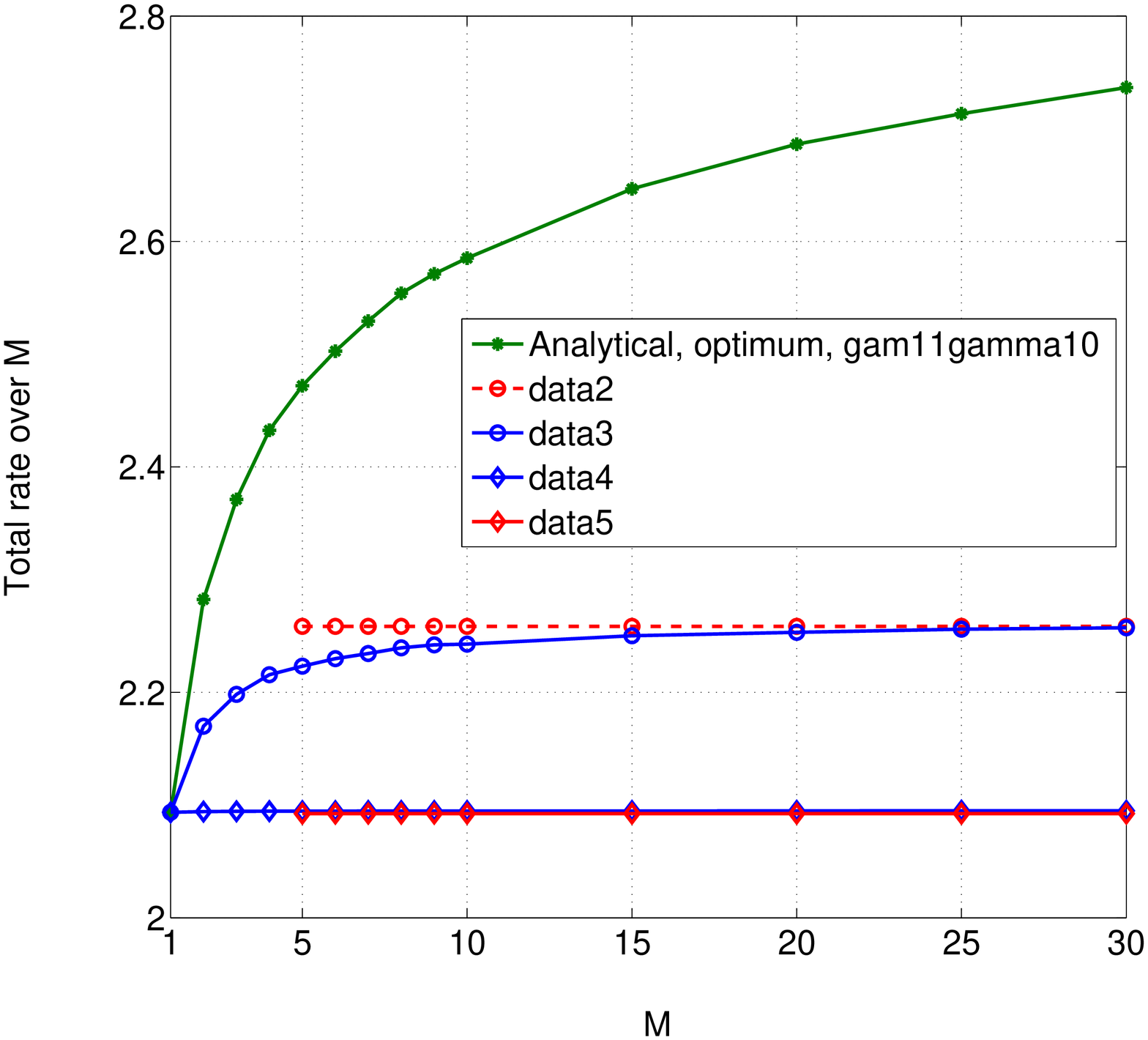}{
\psfrag{Analytical, optimum, gam11gamma10}[l][l][.9]{Full interference cancellation}
\psfrag{data3}[l][l][.9]{Optimum, analytical, $M\uparrow \infty$}
\psfrag{data2}[l][l][.9]{Optimum, simulation}
\psfrag{data4}[l][l][.9]{Random, simulation $M\uparrow \infty$}
\psfrag{data5}[l][l][.9]{Random, analytical $M\uparrow \infty$}
%\psfrag{data6}[l][l][.8]{Optimum, simulation}
%\psfrag{data7}[l][l][.8]{Random, simulation} 
%\psfrag{data8}[l][l][.8]{Random, analytical ($M\uparrow \infty$)}

\psfrag{Total rate over M}[c][c][.9]{Rate per user (bits/channel use)}
\psfrag{M}[c][c][1]{$M$}

\psfrag{0}[c][c][.8]{$0$}
\psfrag{1}[c][c][.8]{$1$}
\psfrag{5}[c][c][.8]{$5$}
\psfrag{10}[c][c][.8]{$10$}
\psfrag{15}[c][c][.8]{$15$}
\psfrag{20}[c][c][.8]{$20$}
\psfrag{25}[c][c][.8]{$25$}
\psfrag{30}[c][c][.8]{$30$}

\psfrag{2}[c][c][.7]{$2$}
%\psfrag{2.85}[c][c][.7]{$2.85$}
%\psfrag{2.9}[c][c][.7]{$2.9$}
%\psfrag{2.95}[c][c][.7]{$2.95$}
%\psfrag{3}[c][c][.7]{$3$}
%\psfrag{2.1}[c][c][.7]{$2.1$}
%\psfrag{2.12}[c][c][.7]{$2.12$}
%\psfrag{2.14}[c][c][.7]{$2.14$}
%\psfrag{2.16}[c][c][.7]{$2.16$}
%\psfrag{2.18}[c][c][.7]{$2.18$}
\psfrag{2.2}[c][c][.7]{$2.2$}
%\psfrag{2.22}[c][c][.7]{$2.22$}
%\psfrag{2.5}[c][c][.7]{$2.5$}
\psfrag{2.4}[c][c][.7]{$2.4$}
\psfrag{2.6}[c][c][.7]{$2.6$}
%\psfrag{2.7}[c][c][.7]{$2.7$}
\psfrag{2.8}[c][c][.7]{$2.8$}

}}
\caption{The rate per user of optimum and random user pairing for SAMS in frequency-flat fading, the EP method and $ \bar{\gamma}=15$~dB. The ergodic capacity per user, which is achieved when the full interference cancellation is used, is also plotted.}

\label{flatgamma5_10}
\end{figure}

%Fig 4
Next, we plot the analytical results for $M\rightarrow \infty$ versus $\bar{\gamma}$ for optimum and random pairing in both the EP and PPC case. The results are given in Fig.~\ref{analyti_flat}. It is observed that optimum pairing performs about 1~dB better than random pairing for both EP and PPC. The EP method performs about 2.5~dB better than the PPC method at around $\bar{\gamma}=10$~dB. As a benchmark, the ergodic capacities of EP and of PPC are also plotted.  The EP method with optimum pairing looses about 4~dB compared to the ergodic capacity. For the PPC method, optimum pairing falls only little behind the ergodic capacity.

\begin{figure}[t!]
\centering
\resizebox{.6\linewidth}{!}{
\pstool[width=.66\linewidth]{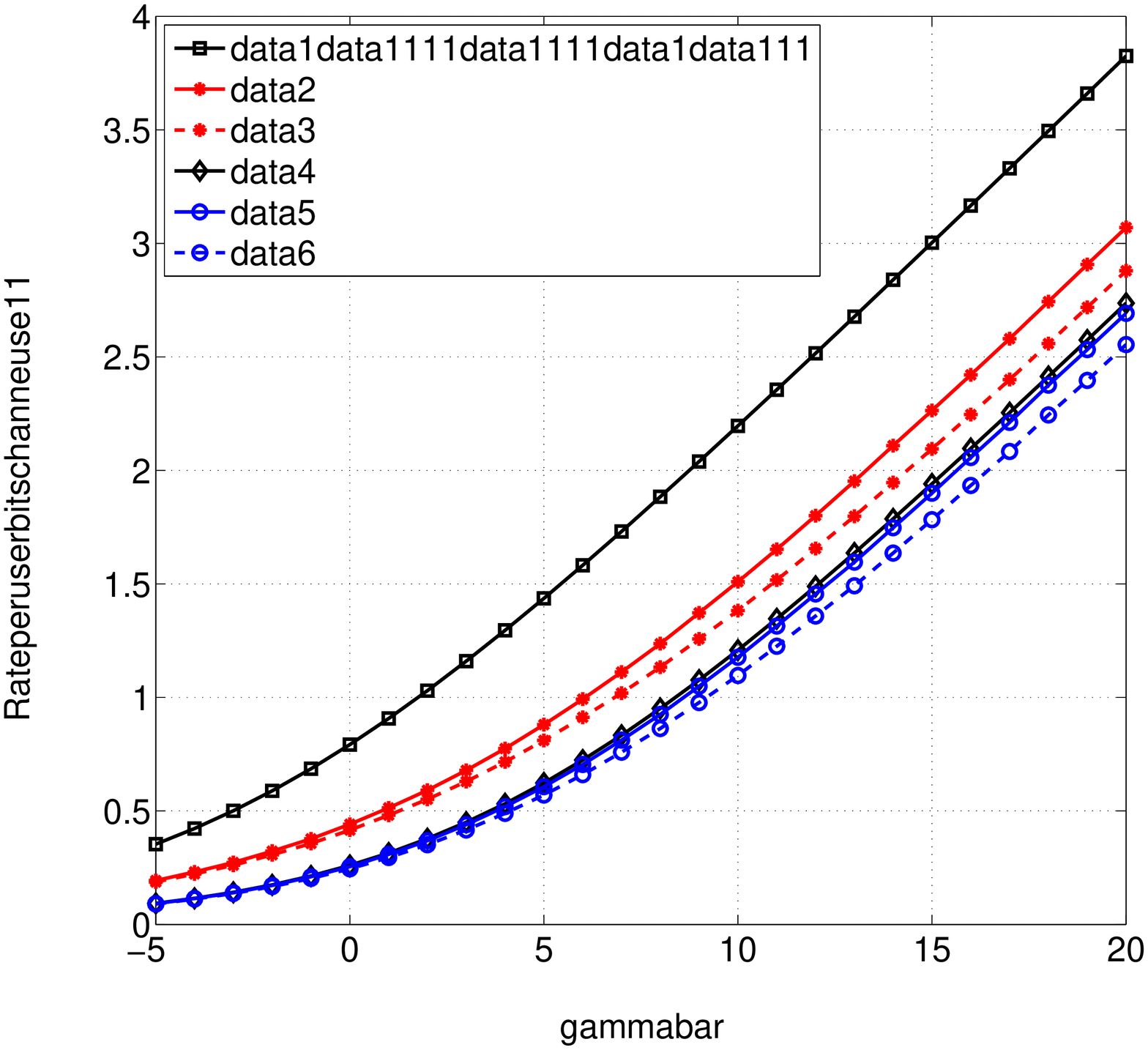}{
\psfrag{Rateperuserbitschanneuse11}[l][l][.9]{Rate per user (bits/channel use)}
\psfrag{data1data1111data1111data1data111}[l][l][.85]{Full interference cancellation, EP}
\psfrag{data2}[l][l][.85]{Optimum, EP}
\psfrag{data3}[l][l][.85]{Random, EP}
\psfrag{data4}[l][l][.85]{Full interference cancellation, PPC}
\psfrag{data5}[l][l][.85]{Optimum, PPC}
\psfrag{data6}[l][l][.85]{Random, PPC}

\psfrag{gammabar}[l][l][1]{$\bar{\gamma}~[{\rm dB}]$}

\psfrag{-10}[c][c][.8]{$-10$}
\psfrag{-5}[c][c][.8]{$-5$}
\psfrag{0}[c][c][.8]{$0$}
\psfrag{5}[c][c][.8]{$5$}
\psfrag{10}[c][c][.8]{$10$}
\psfrag{15}[c][c][.8]{$15$}
\psfrag{20}[c][c][.8]{$20$}
\psfrag{25}[c][c][.8]{$25$}
\psfrag{30}[c][c][.8]{$30$}

\psfrag{0.5}[c][c][.7]{$0.5$}
\psfrag{1}[c][c][.7]{$1$}
\psfrag{1.5}[c][c][.7]{$1.5$}
\psfrag{2}[c][c][.7]{$2$}
\psfrag{2.5}[c][c][.7]{$2.5$}
\psfrag{3}[c][c][.7]{$3$}
\psfrag{3.5}[c][c][.7]{$3.5$}
\psfrag{4}[c][c][.7]{$4$}

}}
\caption{The average rate of the users versus $\bar{\gamma}$ for optimum and random pairing methods, and SAMS in frequency-flat fading.}

\label{analyti_flat}
\end{figure}

We also compare the performance of the NOMA scheme using optimum user pairing against the TDMA schemes in Fig.~\ref{reqPowerDiscflat}. The number of subcarriers is set to $M=10$. The normalized required total power to achieve the same set of rates as in the NOMA scheme are plotted versus SNR. The TDMA schemes with $\zeta_i=0.5$ require much higher power compared to the case of NOMA. This extra power for EP is much larger than for PPC.

\begin{figure}[t!]
\centering
\resizebox{.6\linewidth}{!}{
\pstool[width=.61\linewidth]{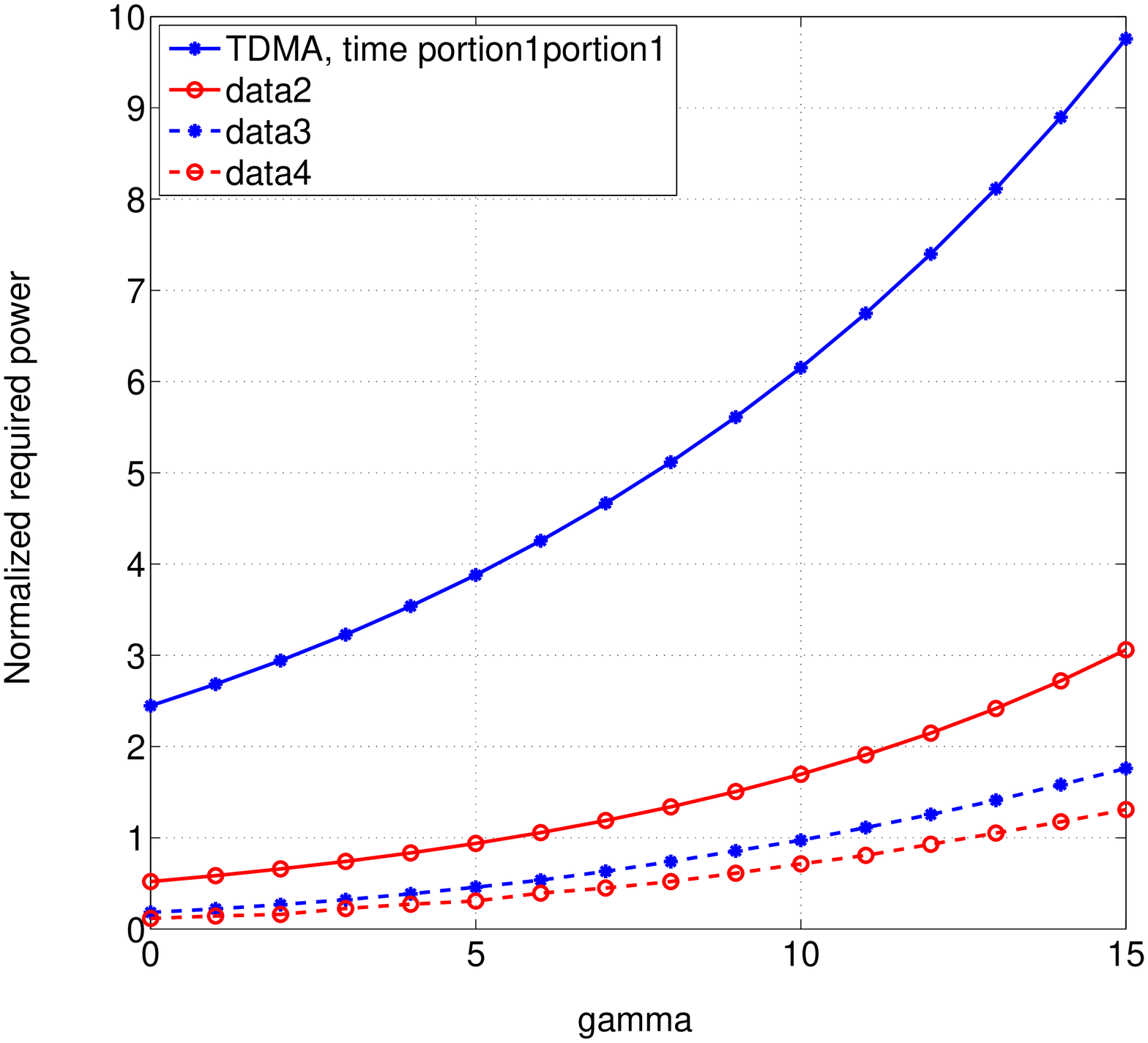}{
\psfrag{TDMA, time portion1portion1}[l][l][.8]{TDMA, EP, $\zeta_i=0.5$}
\psfrag{data2}[l][l][.8]{TDMA, PPC, $\zeta_i=0.5$}
\psfrag{data3}[l][l][.8]{TDMA, EP, optimum $\zeta_i$}
\psfrag{data4}[l][l][.8]{TDMA, PPC, optimum $\zeta_i$}

\psfrag{Normalized required power}[c][c][.9]{Normalized total required power [dB]}

\psfrag{gamma}[c][c][1]{$\bar{\gamma}~[{\rm dB}]$}

\psfrag{0}[c][c][.8]{$0$}
\psfrag{1}[c][c][.8]{$1$}
\psfrag{2}[c][c][.8]{$2$}
\psfrag{3}[c][c][.8]{$3$}
\psfrag{4}[c][c][.8]{$4$}
\psfrag{5}[c][c][.8]{$5$}
\psfrag{6}[c][c][.8]{$6$}
\psfrag{7}[c][c][.8]{$7$}
\psfrag{8}[c][c][.8]{$8$}
\psfrag{9}[c][c][.8]{$9$}

\psfrag{10}[c][c][.8]{$10$}
\psfrag{12}[c][c][.8]{$12$}
\psfrag{14}[c][c][.8]{$14$}
\psfrag{15}[c][c][.8]{$15$}
\psfrag{18}[c][c][.8]{$18$}
\psfrag{20}[c][c][.8]{$20$}

}}
\caption{The total required power of two OMA schemes normalized by the total power of the NOMA scheme for SAMS in frequency-flat fading and $M=10$.}

\label{reqPowerDiscflat}
\end{figure}

\subsection{MBASS-SAU}
The results for optimum user pairing in the case of multi-antenna base stations and one subcarrier are presented in this subsection. The users are assumed to have a single-antenna. The results are plotted in Fig.~\ref{multipleantenna_fig7gamma5_10} for $K/N=2$ and $\bar{\gamma}=15$~dB. The rate per user for random pairing is also plotted. It is observed that for small and moderate $N$,  optimum user pairing outperforms random pairing significantly. For $K,N\rightarrow \infty$ and both EP and PPC, the results for both optimum and random pairing converge to the same limit. For PPC, the analytical result for $M\rightarrow \infty$, derived based on Theorem~\ref{lemmafree} for $\alpha=2$, is also plotted. It is observed that the results for both random and optimum pairing converge to the analytical limit for large $N$.

%Fig 7 amade ast!!
\begin{figure}[t!]
\centering
\resizebox{.6\linewidth}{!}{
\pstool[width=.59\linewidth]{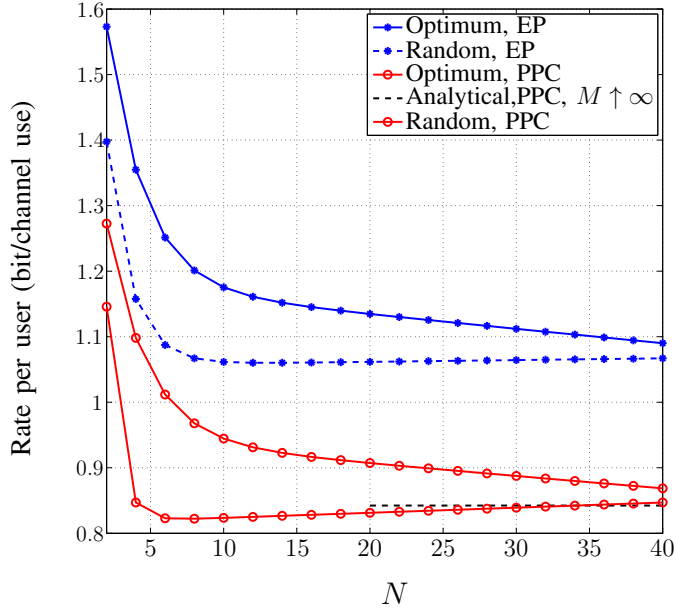}{
\psfrag{data1data1data1data1data1}[l][l][.8]{Optimum, EP}
\psfrag{data2}[l][l][.8]{Random, EP}
\psfrag{data3}[l][l][.8]{Optimum, PPC}
\psfrag{data4}[l][l][.8]{Analytical,PPC, $M\uparrow \infty$}
\psfrag{data5}[l][l][.8]{Random, PPC}

\psfrag{Rateperuserbitsperusers}[c][c][.9]{Rate per user (bit/channel use)}

\psfrag{N}[c][c][1]{$N$}

\psfrag{2}[c][c][.8]{$2$}
\psfrag{5}[c][c][.8]{$5$}
\psfrag{10}[c][c][.8]{$10$}
\psfrag{15}[c][c][.8]{$15$}
\psfrag{20}[c][c][.8]{$20$}
\psfrag{25}[c][c][.8]{$25$}
\psfrag{30}[c][c][.8]{$30$}
\psfrag{35}[c][c][.8]{$35$}
\psfrag{40}[c][c][.8]{$40$}

\psfrag{0.7}[c][c][.7]{$0.7$}
\psfrag{0.8}[c][c][.7]{$0.8$}
\psfrag{0.9}[c][c][.7]{$0.9$}
\psfrag{1}[c][c][.7]{$1$}
\psfrag{1.1}[c][c][.7]{$1.1$}
\psfrag{1.2}[c][c][.7]{$1.2$}
\psfrag{1.3}[c][c][.7]{$1.3$}
\psfrag{1.4}[c][c][.7]{$1.4$}
\psfrag{1.5}[c][c][.7]{$1.5$}
\psfrag{1.6}[c][c][.7]{$1.6$}

}}
\caption{The rate per user for MBASS-SAU. The base station has $N$ antenna and the users have single antenna. $\bar{\gamma}$ is set to 15~dB.}

\label{multipleantenna_fig7gamma5_10}
\end{figure}

%disc

\subsection{MBASS-MAU}
In this subsection, we present the results for the NOMA scheme introduced in Section~\ref{rankdef} in which the users and the base station are equipped with multiple antennas. The results are plotted for $K=2N$, $N=2L+1$ and $\bar{\gamma}=10$~dB in Fig.~\ref{multian_users8} for both EP and PPC. The results of NOMA with user alignment, random user pairing and optimum detection are also plotted for sake of comparison. We leave optimum user pairing in user alignment based NOMA for future works. It is observed that the method introduced in Section~\ref{rankdef} with random user pairing performs better than the alignment method with random pairing. Furthermore, optimum user pairing derived by the Hungarian algorithm has much better performance than random user pairing in the proposed NOMA scheme for finite $N$. It can be seen that similar to the case of single-antenna users, random user pairing approaches the performance of optimum user pairing for $N\rightarrow \infty$.

\begin{figure}[t!]
\centering
\resizebox{.6\linewidth}{!}{
\pstool[width=.62\linewidth]{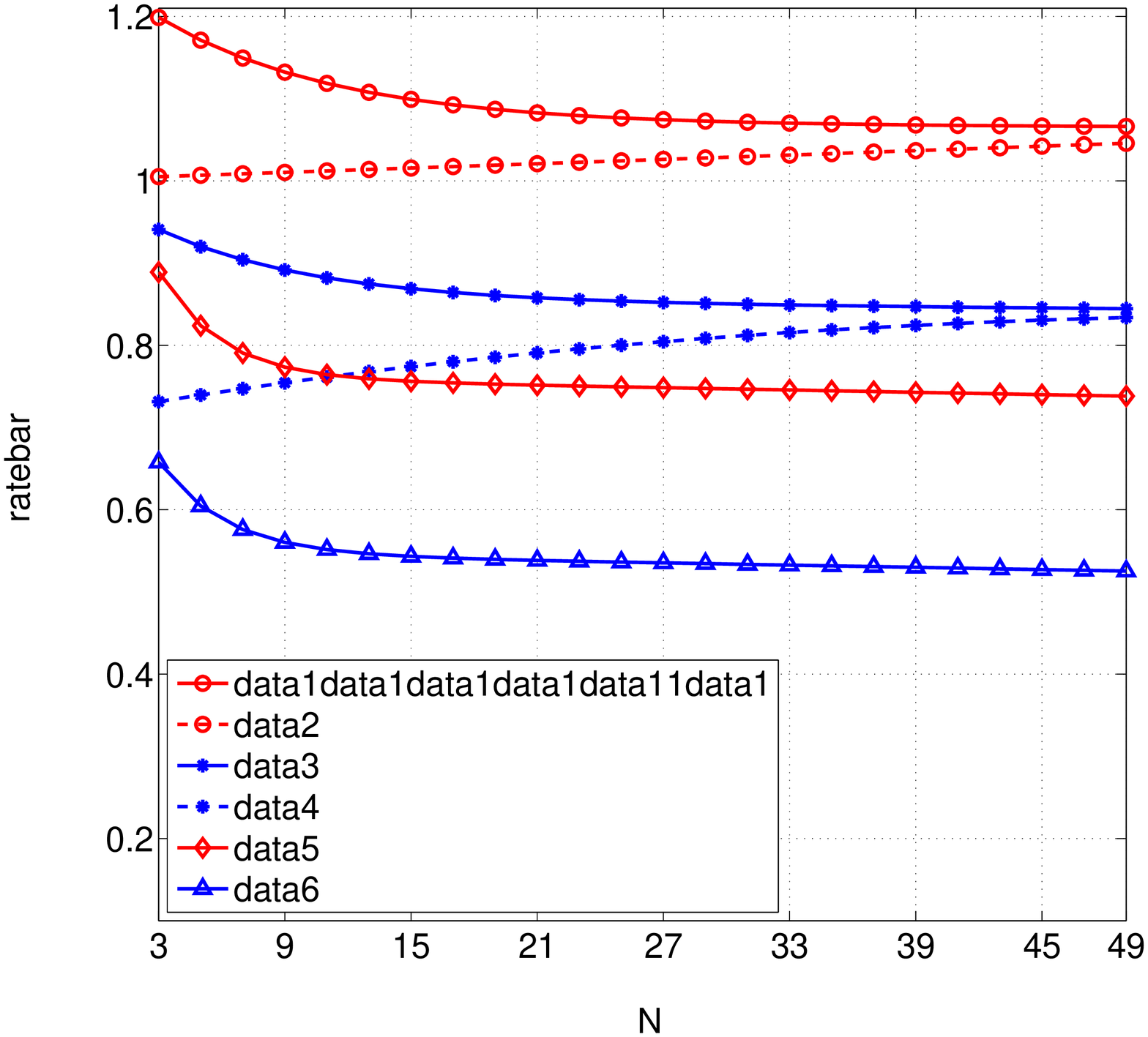}{
\psfrag{data1data1data1data1data11data1}[l][l][.85]{Our method, EP, optimum }
\psfrag{data2}[l][l][.85]{Our method, EP, random }
\psfrag{data3}[l][l][.85]{Our method, PPC, optimum }
\psfrag{data4}[l][l][.85]{Our method, PPC, random }
\psfrag{data5}[l][l][.85]{User alignment, EP, random }

\psfrag{data6}[l][l][.85]{User alignment, PPC, random}

\psfrag{ratebar}[c][c][.9]{Rate per user (bit/channel use)}

\psfrag{N}[c][c][1]{$N$}

\psfrag{3}[c][c][.8]{$3$}
\psfrag{9}[c][c][.8]{$9$}
\psfrag{15}[c][c][.8]{$15$}
\psfrag{21}[c][c][.8]{$21$}
\psfrag{27}[c][c][.8]{$27$}
\psfrag{33}[c][c][.8]{$33$}
\psfrag{39}[c][c][.8]{$39$}
\psfrag{45}[c][c][.8]{$45$}
\psfrag{49}[c][c][.8]{$49$}

\psfrag{0.2}[c][c][.7]{$0.2$}
\psfrag{0.4}[c][c][.7]{$0.4$}
\psfrag{0.6}[c][c][.7]{$0.6$}
\psfrag{0.8}[c][c][.7]{$0.8$}
\psfrag{1}[c][c][.7]{$1$}
\psfrag{1.2}[c][c][.7]{$1.2$}
\psfrag{0.95}[c][c][.7]{$0.95$}
\psfrag{1}[c][c][.7]{$1$}
\psfrag{1.05}[c][c][.7]{$1.05$}

}}
\caption{The average rate for MBASS-MAU, $N=2L+1=K/2$ and $\bar{\gamma}=10~{\rm dB}$.}

\label{multian_users8}
\end{figure}

\section{Conclusions and future works}
In this paper, NOMA uplink was considered and optimal user pairing was discussed in various settings. For single-antenna base stations and user terminals communicating over $M$ subcarriers with optimal user pairing, it was shown that the total rate scales with $M\log_2\log M$ if the users have iid small scale fading coefficients. For frequency-flat channels over the subcarriers, the optimal user pairing method was analyzed in the large system limit and shown to be a bit better than random pairing. In both cases, two orthogonal methods were compared against NOMA, and it was shown that the power efficiency of NOMA is much higher than the efficiencies of the orthogonal schemes. Furthermore, a novel NOMA scheme for the case of multi-antenna base stations was proposed in which the user terminals transmit over a single subcarrier. It was shown that optimum user pairing can be implemented by means of the Hungarian algorithm with polynomial complexity. The results for multi-antenna users were compared against NOMA based on signal alignment. It was shown that a significant performance gain can be achieved. 

In this paper, the power allocation was assumed to be fixed for the user terminals. Extending the results of this paper to the case of joint user pairing and power allocation can be interesting future work. One possible approach is to iterate user pairing and power allocation. Another problem for future is the joint assignment of subcarrier and spatial dimension. Furthermore, the methods in this paper were based on perfect channel state information. Investigating the effect of imperfect channel state information and extending the results to the case of user pairing based on large scale fading coefficients for the sake of complexity reduction can be interesting future works.

\begin{appendices}
%\section{A brief review on Hungarian algorithm} \label{hungarian}

\section{Proof of Theorem~\ref{theor1}}\label{prooftheo1}
We first introduce a lower bound and an upper bound for the maximum total rate and then prove the theorem using these two bounds.
\subsection{The upper bound}
A simple upper bound for the total rate is obtained when the best two users at each subcarrier are selected. Note that this method may not be achievable since some of the users may be among the best users for several subcarriers. Consider the $m$th subcarrier and let also $\kappa_{2m-1}$ and $\kappa_{2m}$ be the indices of the users with the first and second strongest channels on the $m$th subcarrier, i.e.,  
\begin{eqnarray}
\kappa_{2m-1}=\argmax\limits_{i\in\{1,\ldots,2M\}}\gamma_i |h_{i,m}|^2,
\end{eqnarray}
and
\begin{eqnarray}
\kappa_{2m}=\argmax\limits_{i\in\{1,\ldots,2M\},~i\neq \kappa_{2m-1}  }\gamma_i |h_{i,m}|^2.
\end{eqnarray}
Let the support of $\gamma_i$s be $[c_1,c_2]$ where $0<c_1<c_2<\infty$.
\begin{lemma}
For $M\rightarrow \infty$, 
\begin{eqnarray}
c_1\convleq \frac{\gamma_{\kappa_{2m-1}}|h_{\kappa_{2m-1},m}|^2}{\log(M)},\frac{\gamma_{\kappa_{2m}}|h_{\kappa_{2m},m}|^2}{\log(M)}\convleq c_2.
\end{eqnarray} 
%\begin{eqnarray}
%\lim\limits_{M\rightarrow \infty} \frac{\gamma_{\kappa_{2m-1}}|h_{\kappa_{2m-1},m}|^2}{\log(M)}=\lim\limits_{M\rightarrow \infty} \frac{\gamma_{\kappa_{2m}}|h_{\kappa_{2m},m}|^2}{\log(M)}=c.
%\end{eqnarray}
\end{lemma}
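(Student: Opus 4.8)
The plan is to reduce the claim to the classical extreme–value asymptotics of i.i.d.\ exponential random variables. Fix the subcarrier index $m$ and write $E_i=|h_{i,m}|^2$, so that $E_1,\dots,E_{2M}$ are i.i.d.\ $\mathrm{Exp}(1)$; let $E_{(1)}\ge E_{(2)}$ denote their two largest values. Since every $\gamma_i$ lies in $[c_1,c_2]$ and $\kappa_{2m-1},\kappa_{2m}$ are by definition the indices of the two largest entries of $\{\gamma_iE_i\}_{i=1}^{2M}$, I would first record the purely deterministic bounds
\begin{eqnarray}
c_1 E_{(1)} &\le& \gamma_{\kappa_{2m-1}}E_{\kappa_{2m-1}} \le c_2 E_{(1)}, \\
c_1 E_{(2)} &\le& \gamma_{\kappa_{2m}}E_{\kappa_{2m}} \le c_2 E_{(1)}.
\end{eqnarray}
The only non-obvious inequality is the lower one on the second line: if $a,b$ are the indices attaining $E_{(1)}$ and $E_{(2)}$, then both $\gamma_aE_a$ and $\gamma_bE_b$ are $\ge c_1E_{(2)}$, so the second largest of $\{\gamma_iE_i\}$ — which equals $\max_{i\ne j}\min(\gamma_iE_i,\gamma_jE_j)$ — is at least $\min(\gamma_aE_a,\gamma_bE_b)\ge c_1E_{(2)}$. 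Dividing through by $\log M$, it then suffices to prove $E_{(1)}/\log M\convp 1$ and $E_{(2)}/\log M\convp 1$.

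For the maximum I would use $\Pr(E_{(1)}\le x)=(1-\e^{-x})^{2M}$: taking $x=(1-\epsilon)\log(2M)$ gives $(1-(2M)^{-(1-\epsilon)})^{2M}\to 0$ because $2M\cdot(2M)^{-(1-\epsilon)}=(2M)^{\epsilon}\to\infty$, while a union bound gives $\Pr(E_{(1)}\ge(1+\epsilon)\log(2M))\le 2M(2M)^{-(1+\epsilon)}=(2M)^{-\epsilon}\to 0$; combined with $\log(2M)/\log M\to 1$ this yields $E_{(1)}/\log M\convp 1$. For the second largest, $E_{(2)}\le E_{(1)}$ gives the upper half for free, and for the lower half I would count $N_\epsilon:=\#\{i:E_i>(1-\epsilon)\log(2M)\}\sim\mathrm{Binomial}\big(2M,(2M)^{-(1-\epsilon)}\big)$, whose mean $(2M)^{\epsilon}$ diverges, so $\Pr(E_{(2)}\le(1-\epsilon)\log(2M))=\Pr(N_\epsilon\le 1)\to 0$; hence $E_{(2)}/\log M\convp 1$ as well.

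Finally I would combine the two ingredients: for every $\epsilon>0$, $\Pr\big(\gamma_{\kappa_{2m-1}}E_{\kappa_{2m-1}}/\log M<c_1-\epsilon\big)\le\Pr\big(c_1E_{(1)}/\log M<c_1-\epsilon\big)\to 0$, and symmetrically on the upper side and for the $\kappa_{2m}$ term, which is precisely the asserted sandwiching (with the limiting ``$\convleq$'' read in the in-probability sense that is what is needed for Theorem~\ref{theor1}). The argument is essentially routine; the only point requiring a little care is the lower tail of the \emph{second} order statistic, handled by the binomial estimate above. I would also remark that no independence between the factors $\gamma_i$ and the fading $h_{i,m}$ is used — since $\gamma_i$ enters only multiplicatively and is bounded away from $0$ and $\infty$, the extremes of $\{\gamma_iE_i\}$ are deterministically controlled by those of $\{E_i\}$.
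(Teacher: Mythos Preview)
Your proposal is correct and follows essentially the same approach as the paper: the paper simply invokes the Fisher--Tippett--Gnedenko theorem together with the deterministic sandwich $c_1\le\gamma_i\le c_2$, while you spell out the same two ingredients explicitly, giving an elementary derivation of $E_{(1)}/\log M\convp 1$ and $E_{(2)}/\log M\convp 1$ in place of the citation. The only difference is that your write-up is more self-contained; the underlying idea is identical.
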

\begin{proof}
%Let  

The proof is straightforward using the Fisher-Tippett-Gnedenko theorem \cite{david1981order} and the fact that $c_1\leq \gamma_i\leq c_2$ for all $i\in\{1,\ldots,2M\}$.
\end{proof}
Using the result of this lemma, it is concluded that the total rate of this upper bound method is less than or equal to $M\log_2(1+2c_2 \log M)$ with probability $1$. Since this method gives an upper bound for the optimum total rate, it is concluded that for $M\rightarrow \infty$
\begin{eqnarray}
 \frac{\max\limits_{(\ell_{2m-1},\ell_{2m})\in\Pi}\sum\limits_{m=1}^K R_m}{M\log_2(1+2c_2 \log M)}\convleq 1.
\end{eqnarray}

\subsection{The lower bound}
Consider an achievable strategy in which for each subcarrier the first and the second best user among all the users excluding the users already selected for the previous subcarriers, are chosen. This means that we first select the two best users for the first subcarrier and then excluding this two users we select the best two users for the second subcarrier and continue this procedure up to the end. In fact, for the $m$th subcarrier, we select the best two users among $2(M-m+1)$ users.  Let $\eta_{2m-1}$ and $\eta_{2m}$ denote the indices of the first and the second selected user, respectively, at the $m$th subcarrier. Using a similar method as in the previous subsection, it can be shown that for all $2(M-m+1)$ where $m=\nu M$ and $\nu\in (0,1]$ is a non-vanishing constant, both  
$\frac{\gamma_{\eta_{2m-1}}|h_{\eta_{2m-1},m}|^2}{\log(M-m+1)}$ and $ \frac{\gamma_{\eta_{2m}}|h_{\eta_{2m},m}|^2}{\log(M-m+1)}$ are in $[c_1,c_2]$ with probability of $1$ when $M\rightarrow \infty$ \footnote{ Note that for finite $2(M-m+1)$, i.e., vanishing $\nu$, the statement may not be true but this does not change the asymptotic behavior.}. Therefore, the asymptotic behavior of the total rate of the considered method behaves similarly as $\sum\limits_{m=1}^M \log_2(1+2\hat{c} \log m)$ where $\hat{c}$ is a nonzero positive constant. Next, we present a lemma to investigate the asymptotic behavior of this sum:
%\[\frac{\sum\limits_{m=1}^M \log_2(1+2\hat{c} \log m)}{M\log_2(1+2\hat{c}\log M)}\].

\begin{lemma}\label{lemmastirtli}
For finite and positive $\hat{c}$, we have
\begin{eqnarray}
\lim\limits_{M\rightarrow \infty}\frac{\sum\limits_{m=1}^M \log_2(1+2\hat{c} \log m)}{M\log_2(1+2\hat{c}\log M)}= 1.
\end{eqnarray}
\end{lemma}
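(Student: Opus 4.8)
The plan is to show the ratio tends to $1$ by a sandwiching argument on the partial sum $S_M \define \sum_{m=1}^M \log_2(1+2\hat c \log m)$, treating both the numerator and the denominator $M \log_2(1+2\hat c \log M)$ asymptotically. The key observation is that the summand $\log_2(1+2\hat c\log m)$ is increasing in $m$, so each term is at most $\log_2(1+2\hat c\log M)$, which immediately gives $S_M \le M\log_2(1+2\hat c\log M)$ and hence $\limsup$ of the ratio is $\le 1$. The real content is the matching lower bound.

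For the lower bound I would split the sum at an index like $m_0 = \lceil M/\log M\rceil$ (or more simply at $\lfloor \sqrt M\rfloor$), discard the contribution of the first $m_0$ terms (which is nonnegative, so dropping it only decreases the sum), and bound each of the remaining $M - m_0$ terms from below by $\log_2(1+2\hat c \log m_0)$. This yields
\begin{eqnarray}
\frac{S_M}{M\log_2(1+2\hat c\log M)} \ge \frac{(M-m_0)\log_2(1+2\hat c\log m_0)}{M\log_2(1+2\hat c\log M)}.
\end{eqnarray}
Then $M-m_0 = M(1-o(1))$, and with $m_0 = \lceil M/\log M\rceil$ we have $\log m_0 = \log M - \log\log M = \log M(1-o(1))$, so $\log_2(1+2\hat c\log m_0)/\log_2(1+2\hat c\log M) \to 1$ because $\log_2(1+2\hat c x)$ is slowly varying and $\log m_0/\log M \to 1$. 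Hence the right-hand side tends to $1$, giving $\liminf \ge 1$. Combining the two bounds finishes the proof.

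An alternative I would keep in reserve is the integral / Stesaro-type route: since $a_m \define \log_2(1+2\hat c\log m)$ is monotone and $a_M/a_{M-1}\to 1$, one can compare $S_M$ to $\int_1^M \log_2(1+2\hat c\log x)\,\dif x$ and show both $S_M$ and the integral are asymptotic to $M\log_2(1+2\hat c\log M)$; the integral estimate follows from integration by parts, the dominant term being $M\log_2(1+2\hat c\log M)$ with all corrections of order $M/\log M$ or smaller. Either way, the main obstacle — and it is a mild one — is just bookkeeping the ``slowly varying'' cancellation $\log m_0/\log M\to1$ cleanly, i.e.\ making precise that replacing $\log M$ by $\log M - \log\log M$ inside the outer $\log_2(1+2\hat c\,\cdot\,)$ costs only a factor $1+o(1)$; no deep tool is needed beyond monotonicity of the summand and elementary asymptotics.
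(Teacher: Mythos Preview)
Your argument is correct. The upper bound via monotonicity of the summand is the same as the paper's, and your lower bound via the split at $m_0=\lceil M/\log M\rceil$ goes through exactly as you describe; the only delicate point, that $\log_2(1+2\hat c\log m_0)/\log_2(1+2\hat c\log M)\to 1$, follows from $\log m_0/\log M\to 1$ together with $\log_2(1+2\hat c x)\sim \log_2 x$ as $x\to\infty$, which you identify.

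The route, however, is genuinely different from the paper's. The paper does not split the sum at all. Instead it observes that for fixed $m\le M$ the ratio $\log_2(1+2\hat c\log m)/\log_2(1+2\hat c\log M)$ is monotone in $\hat c$ (increasing when $\log m<\log M$), so the whole normalized sum is bounded below by its limit as $\hat c\to 0^+$; by L'Hospital that limit is $\sum_{m=1}^M \log m\big/(M\log M)=\log(M!)/(M\log M)$, which tends to $1$ by Stirling. Thus the paper reduces the problem, uniformly in $\hat c$, to the classical Stirling limit, at the price of invoking (and not proving) the monotonicity of $x\mapsto\log(1+a_1x)/\log(1+a_2x)$. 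Your approach trades that lemma for a completely elementary tail estimate; it is more self-contained and arguably cleaner, while the paper's trick has the aesthetic appeal of collapsing the $\hat c$-dependence in one stroke.
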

\begin{proof}
The function 
\begin{eqnarray}
w(x)=\frac{\log_2(1+a_1x)}{\log_2(1+a_2 x)}
\end{eqnarray}
is monotonically increasing if $a_1<a_2$ and decreasing if $a_1>a_2$ for $x\geq 0$. For $a_1=a_2$, the function becomes constant. Therefore, using L'Hospital's rule and the Stirling approximation, we have
\begin{eqnarray} \label{limit1}
\lim\limits_{M\rightarrow \infty}\frac{\sum\limits_{m=1}^M \log_2(1+2\hat{c} \log m)}{M\log_2(1+2\hat{c} \log M)}&\geq &
\lim\limits_{M\rightarrow \infty}\lim\limits_{\hat{c}\rightarrow 0^+}\frac{\sum\limits_{m=1}^M \log_2(1+2\hat{c} \log m)}{M\log_2(1+2\hat{c} \log M)}=
\lim\limits_{M\rightarrow \infty}\frac{\sum\limits_{m=1}^M \log m}{M \log M} \nonumber \\
&=&\lim\limits_{M\rightarrow\infty}\frac{\log(M!)}{M \log M}=1.
\end{eqnarray}
Furthermore, since $m\leq M$ in the summation, the limit is less than or equal to $1$. Therefore, the limit is $1$.
\end{proof}

Therefore, the total rate of the considered pairing method normalized by $M\log_2 \log M$ converges to $1$ in probability.

Finally, based on the results obtained from the above two subsections, it is concluded that 
\begin{eqnarray}\label{limitnoma2}
 \frac{\max\limits_{(\ell_{2m-1},\ell_{2m})\in\Pi}\sum\limits_{m=1}^M R_m}{M\log_2 \log M}\convp 1
\end{eqnarray}
 for $M\rightarrow \infty$. %Using the similar procedure, \eqref{limitnoma2} can be shown easily.

\section{Proof of Theorem~\ref{lemmafree}} \label{prooflemmafree}
For such a setting, we have
\begin{eqnarray}
\matr{\Omega}= \left(\matr{H}_{\setminus m}\matr{H}_{\setminus m}^\dagger+ \matr{I}/\gamma  \right)^{-1/2}  \matr{H}_{m}\matr{H}_{m}^\dagger \left( \matr{H}_{\setminus m}\matr{H}_{\setminus m}^\dagger+ \matr{I}/\gamma  \right)^{-1/2}.
\end{eqnarray}
The two nonzero eigenvalues of $\matr\Omega$ are equal to the eigenvalues of 
\begin{eqnarray}
\breve{\matr{\Omega}}= \matr{H}_{m}^\dagger \left( \matr{H}_{\setminus m}\matr{H}_{\setminus m}^\dagger+ \matr{I}/\gamma  \right)^{-1}  \matr{H}_{m},
\end{eqnarray}
which is a $2\times 2$ matrix. For $N,K\rightarrow \infty$, the off-diagonal entries of $\breve{\matr{\Omega}}$ converge to zero due to the fact that the entries of $\matr{H}_m$ and $\matr{H}_{\setminus m}$ are iid and zero mean. Furthermore, 
the two diagonal entries of $\breve{\matr{\Omega}}$ fulfill \cite[Lemma 2.29]{tulino2004random}
\begin{eqnarray}
\breve{\matr{\Omega}}_{1,1}\convp \breve{\matr{\Omega}}_{2,2} \convp {\rm G}\left( -\frac{1}{\gamma} \right),
\end{eqnarray}
where 
\begin{eqnarray}
{\rm G}(w)=\sqrt{\frac{(1-\alpha)^2}{4w^2}-\frac{1+\alpha}{2w} +\frac{1}{4}}-\frac{1}{2}-\frac{1-\alpha}{2w}
\end{eqnarray}
is the Stieltjes transform of the eigenvalue distribution of $\matr{H}_{\setminus m}\matr{H}_{\setminus m}^\dagger$ \cite{muller2013applications}. Therefore, the two eigenvalues of $\breve{\matr{\Omega}}$ are obtained as
\begin{eqnarray}
\lambda_{m,1}\convp\lambda_{m,2}\convp
\sqrt{\frac{(1-\alpha)^2\gamma^2}{4}+\frac{(1+\alpha)\gamma}{2} +\frac{1}{4}}-\frac{1}{2}+\frac{(1-\alpha)\gamma}{2},
\end{eqnarray}
for $K,N\rightarrow \infty$.

\end{appendices}

\bibliographystyle{IEEEtran} 
\bibliography{lit}

\end{document}